\documentclass[11pt]{article}

\usepackage[T1]{fontenc}
\usepackage[utf8]{inputenc}
\usepackage{lmodern}
\usepackage[letterpaper,margin=1in]{geometry}
\usepackage{amsmath,amssymb,bm,mathtools}
\usepackage{booktabs,graphicx,array,capt-of}
\usepackage{pdflscape}
\usepackage{natbib}
\usepackage{microtype}
\usepackage{hyperref}

\hypersetup{
  colorlinks=true,
  linkcolor=blue,
  citecolor=blue,
  urlcolor=blue,
  pdftitle={What does anti-Fourier heat-flux agreement validate? Observability of the fourth-order closure state in a rarefied lid-driven cavity},
  pdfauthor={Ehsan Roohi}
}
\newcommand{\doilink}[1]{\href{https://doi.org/#1}{\nolinkurl{doi:#1}}}

\newcommand{\cl}{\mathrm{cl}}
\newcommand{\Kn}{\mathit{Kn}}
\newcommand{\dd}{\mathrm{d}}

\newcommand{\Aten}{\mathsfbi{A}}

\newcommand{\qv}{\boldsymbol{q}}
\newcommand{\cv}{\boldsymbol{c}}
\newcommand{\uv}{\boldsymbol{u}}
\newcommand{\vvec}{\boldsymbol{v}}
\newcommand{\xv}{\boldsymbol{x}}

\newcommand{\Grad}{\boldsymbol{\nabla}}
\providecommand{\bcdot}{\boldsymbol{\cdot}}
\newcommand{\tr}{\mathrm{tr}}

\providecommand{\mathsfbi}[1]{\bm{\mathsf{#1}}}

\newtheorem{proposition}{Proposition}

\newenvironment{landscapefigurepage}
  {\clearpage\begin{landscape}\thispagestyle{plain}\centering\vspace*{\fill}}
  {\vspace*{\fill}\end{landscape}\clearpage}

\title{What does anti-Fourier heat-flux agreement validate? Observability of the fourth-order closure state in a rarefied lid-driven cavity}

\author{Ehsan Roohi\\[0.35em]
\normalsize Department of Mechanical and Industrial Engineering,\\
\normalsize University of Massachusetts Amherst, Amherst, Massachusetts 01003, USA\\[0.35em]
\normalsize \href{mailto:eroohi@umass.edu}{\texttt{eroohi@umass.edu}}}
\date{}

\begin{document}
\maketitle
\pagestyle{plain}

\begin{abstract}
Cold-to-hot heat transfer is a conspicuous non-equilibrium feature of rarefied cavity flows and is often used to judge whether a continuum closure captures higher-order transport.  We examine what such agreement actually establishes.  The exact heat-flux balance shows that the fourth-order contribution enters only through the divergence of the composite tensor \(A_{ij}=R^{\cl}_{ij}+\Delta\delta_{ij}/3\).  A compactly supported Airy construction yields an infinite-dimensional family of symmetric divergence-free perturbations that preserve the flux-side wall trace.  A two-dimensional-physical, three-dimensional-velocity (2D3V) description also leaves the transverse component \(A_{zz}\) unobserved.  Heat flux therefore cannot uniquely identify the underlying fourth-order state.  We test the consequences using direct simulation Monte Carlo (DSMC) ensembles and independent regularized 13-moment (R13) and regularized 26-moment (R26) solutions at two rarefaction levels.  Counter-gradient transport and the dominant tensorial channel persist under changes in grid and particle number, whereas the local scalar fourth moment and the spatial organization of the counter-gradient region are more sensitive.  A finite-particle-corrected fourth-moment audit shows that more than 95\% of the resolved R26--DSMC composite-tensor discrepancy at the higher rarefaction lies in the discrete divergence-null space, whereas projected sampling noise hides only about half of its energy.  R26 captures the circulation and heat-flux direction more closely than R13, yet the remaining differences concentrate in the wall and corner layers.  The resulting validation hierarchy separates the existence and spatial organization of anti-Fourier transport from vector accuracy and recovery of the closure state.  Higher-moment models should therefore be assessed through additional moment balances and wall data rather than certified by heat-flux agreement alone.
\end{abstract}

\section{Introduction}
\label{sec:intro}

Heat transfer in rarefied microflows can point in directions forbidden by the Fourier law.  In lid-driven cavities, triangular cavities, backward-facing micro-steps and related confined geometries, heat lines may be directed from a colder region towards a warmer region even when the walls are isothermal.  The phenomenon itself is established.  It has been reported in kinetic simulations of rarefied cavities~\citep{JohnGuEmerson2010,JohnGuEmerson2011,MohammadzadehRoohi2012}, interpreted through nonlinear thermal-stress and stress-gradient mechanisms, and discussed using the weakly nonlinear asymptotic theory of Sone \citep{Sone2007} in \citep{mahdavi2015investigation,balaj2017regulation,MahdaviRoohi2022}.  Direct simulation Monte Carlo (DSMC) calculations show that cold-to-hot transfer depends on lid speed and rarefaction and remains consistent with the second law when entropy is evaluated from the molecular distribution \citep{MohammadzadehRoohi2012}.  In backward-facing micro-steps, heat lines deflect towards a warmer inlet region over a wide rarefaction range because Fourier-like transport competes with higher-order velocity-curvature terms in Sone's heat-flow expansion \citep{MahdaviRoohi2022}.

The present paper asks a different question.  If a model reproduces the anti-Fourier heat-flux direction, what high-order closure information has it actually identified?  This distinction matters because moment systems do not only ask for density, velocity, temperature, stress and heat flux.  In Grad-type and regularized moment hierarchies, the heat-flux equation transports fourth-order information.  In the regularized 26-moment (R26) system, the third-order tensor \(m_{ijk}\), the fourth-order tensor \(R_{ij}\), and the scalar fourth-order excess \(\Delta\) are promoted to closure-level variables \citep{Grad1949,Struchtrup2005,StruchtrupTorrilhon2003,StruchtrupTorrilhon2007,GuEmerson2009}.  Reproducing a heat-flux vector is therefore not automatically the same as recovering the complete set of independent higher moments that an R26 model evolves.

This issue has a one-dimensional precursor.  In monatomic normal shocks, the heat-flux budget observes the scalar channel \(S=R^{\mathrm{cl}}_{xx}+\Delta/3\), leaving only an algebraic ambiguity in the tensorial/scalar split \citep{RoohiShock2026}.  A scalar-excess complement can close that one-dimensional split.  A two-dimensional cavity is different because the fourth-order contribution entering the in-plane heat-flux equations is not observed as \(R^{\mathrm{cl}}_{ij}\) and \(\Delta\) separately.  These two pieces enter through the composite tensor
\begin{equation}
A_{ij}=R^{\mathrm{cl}}_{ij}+\frac{1}{3}\Delta\delta_{ij},
\qquad i,j\in\{x,y\}.
\label{eq:Adef_intro}
\end{equation}
Here \(R^{\mathrm{cl}}_{ij}\) is the contracted traceless fourth-order contribution,
\(\Delta\) is the scalar fourth-order excess, and \(\delta_{ij}\) is the Kronecker delta; the precise moment definitions of \(R^{\cl}_{ij}\) and \(\Delta\) are given in \eqref{eq:RDelta_def}.  Equation~\eqref{eq:Adef_intro} is therefore only a bookkeeping definition of the fourth-order channel seen by the heat-flux balance; it is not a closure assumption.  The in-plane heat-flux equations test the divergence \(\partial A_{ij}/\partial x_j\), not \(A_{ij}\) itself.  Consequently, any divergence-free symmetric addition to \(A_{ij}\) remains invisible to this observable.  The hidden space is therefore not a line, but the function space of divergence-free symmetric tensor fields.  Moreover, for a two-dimensional, three-velocity flow with \(\partial/\partial z=0\), the component \(A_{zz}\) can change the internal tensorial/scalar split while remaining absent from the in-plane heat-flux balance.  This is the new obstruction and it has no one-dimensional analogue.

The regularized-moment literature establishes the appropriate expectation for such a comparison.  The regularized 13-moment (R13) system removes the subshock and stability defects of the unregularized Grad system and has been tested in shock structure, two-dimensional bulk flow and wall-bounded microflows \citep{TorrilhonStruchtrup2004,Torrilhon2006,GuEmerson2007,TaheriTorrilhonStruchtrup2009,Torrilhon2016}.  These studies show that regularization recovers non-equilibrium effects absent from Navier--Stokes--Fourier theory, including Knudsen layers, tangential heat flux and non-monotone temperature profiles, but they also show that accuracy depends on the observable, rarefaction level, wall model and molecular interaction law \citep{GuBarberEmerson2007,CaiWang2020}.

R26 enlarges the transported state and represents several Knudsen-layer decay modes that are absent from R13.  For Kramers flow and thermal-jump problems this added structure improves agreement with kinetic solutions \citep{GuEmersonTang2010,GuEmerson2014}, and R26 has also shown strong agreement with DSMC for several cylinder-flow quantities in the slip and early-transition regimes \citep{GuBarberJohnEmerson2019}.  Those successes motivate the present R26 comparison, but they do not imply a monotone reduction of every field error in every multidimensional wall-bounded flow.

The contribution of this paper is therefore not another demonstration of cold-to-hot heat transfer, nor a claim that non-injectivity of the divergence operator is new mathematics.  The contribution is to turn that operator fact into an operational validation hierarchy for rarefied-gas moment models and to test which levels survive a multi-ensemble DSMC sensitivity campaign.  The active anti-Fourier core is tensorial-channel dominated across the tested grid and particle-number designs, consistent with established stress-gradient and velocity-curvature interpretations, while the local scalar fourth moment remains sampling-sensitive.  At \(\Kn_{\rm Gu}=0.20\), a bias-corrected reconstruction of the composite fourth moment supplies the missing physical link: the R26--DSMC tensor error is projected directly into divergence-generating and divergence-free parts, with the transverse \(A_{zz}\) discrepancy treated as exactly hidden.  The conclusion is deliberately modest but sharp: heat-flux direction and selected flux-channel agreement are not full-state certificates.  Agreement in those observables leaves fourth-order information untested unless the additional moment equations and fields are compared independently.

The practical significance is a validation distinction, not a claim of non-uniqueness for the R26 boundary-value problem.  Heat-flux agreement is insufficient as a sole certificate of a high-order closure or data-driven surrogate because it constrains a projected transport channel, whereas the additional R26 evolution equations, wall conditions and directly sampled higher moments test information that this channel cannot observe.  Accordingly, a complete validation programme must compare those additional equations and moments rather than infer them from the anti-Fourier direction alone.

\section{Scope of the claim and validation hierarchy}
\label{sec:scope}

The validation hierarchy treats the wall map as part of the closure: the established R13 boundary-condition analysis of \citet{TorrilhonStruchtrup2008} is one example of why bulk-field agreement alone cannot certify a wall-bounded moment solution.

Four physical comparisons must be kept separate.  Occurrence asks only whether a resolved counter-gradient set exists.  Topology asks whether that set is in the same place and has the same overlap and connected structure.  Vector accuracy compares the direction and magnitude of \(\qv\) itself.  Closure accuracy compares \(\Grad\cdot\Aten\) and the available underlying fourth-order fields.  Numerical convergence, wall residuals and collision-model matching qualify every level but are not substitutes for any of them.  The present work evaluates DSMC occurrence and selected fourth-order projections, and adds explicitly diagnostic R13 and nonlinear R26 comparisons.  It does not evaluate the complete heat-flux-equation residual or validate the complete R26 state.  Table~\ref{tab:validation_hierarchy} states the observable and the limit of inference at each level; these distinctions are used when interpreting every comparison below.

\begin{table}
\centering
\small
\setlength{\tabcolsep}{5pt}
\renewcommand{\arraystretch}{1.12}
\begin{tabular}{p{0.16\textwidth}p{0.27\textwidth}p{0.46\textwidth}}
\toprule
validation level & quantity compared & what agreement does, and does not, establish \\
\midrule
occurrence & non-empty resolved set with \(\qv\cdot\Grad T>0\) & establishes that counter-gradient transport occurs under a declared signal and component-size criterion; does not establish location, vector agreement or closure accuracy \\
\specialrule{0.25pt}{1.8pt}{1.8pt}
topology & common-mask support, Jaccard/Dice overlap, centroid, connectivity and boundary distance & tests where the anti-Fourier set occurs and whether its spatial structure agrees; a conditional occurrence fraction alone is not a topology metric \\
\specialrule{0.25pt}{1.8pt}{1.8pt}
heat-flux vector & relative vector root-mean-square (RMS) error, vector correlation, magnitude ratio and angular error & tests the direction and magnitude of heat transport on one common domain; does not identify a fourth-order tensor \\
\specialrule{0.25pt}{1.8pt}{1.8pt}
fourth-order closure & \(\Grad\cdot\Aten\), its tensor/scalar projections, and sampled \(R^{\cl}_{ij}\), \(\Delta\) and \(m_{ijk}\) fields & tests the observed fourth-order channel and the sampled higher moments; complete R26 validation additionally requires uncertainty-qualified model comparison, all balances, wall maps, collision consistency and convergence \\
\bottomrule
\end{tabular}
\caption{Validation hierarchy used throughout the paper.  The central result concerns selective, flux-side observability.  It is not a statement of non-uniqueness of the complete Boltzmann or R26 initial-boundary-value problem.}
\label{tab:validation_hierarchy}
\end{table}

This hierarchy also fixes the terminology used below.  We call the constructed fields \emph{diagnostic moment perturbations}, not alternative physical closure states.  ``R26-level'' denotes the tensorial content of a diagnostic.  Steady nonlinear R26 discrete solutions are also reported; their numerical resolution and collision-model correspondence are assessed separately from physical-field agreement.

\section{Anti-Fourier heat transfer and the observable closure channel}
\label{sec:theory}

\subsection{Known origin of cold-to-hot transfer}

For a slightly rarefied gas, Sone's asymptotic theory writes the heat-flow vector \(\boldsymbol Q\) as a Knudsen expansion rather than as a purely Fourier response.  In the notation used in our micro-step analysis, \(\bar{\Kn}\) is Sone's small asymptotic Knudsen parameter and \(\boldsymbol Q_n\) is the coefficient at order \(n\), so the heat flow takes the schematic form
\begin{equation}
    \boldsymbol{Q}=\bar{\Kn}\boldsymbol{Q}_1+\bar{\Kn}^2\boldsymbol{Q}_2+\bar{\Kn}^3\boldsymbol{Q}_3,
\label{eq:sone_expansion}
\end{equation}
where \(\boldsymbol{Q}_1=0\) and \(\tau\) denotes the nondimensional temperature perturbation in Sone's asymptotic scaling.  In \eqref{eq:q2}--\eqref{eq:q3} below, \(\tau\) and \(u_i\) stand for the temperature-perturbation and velocity fields at the expansion order on which each term acts; the order subscripts of Sone's solution are suppressed for brevity, and the fully indexed expressions are given in \citet{Sone2007} and \citet{MahdaviRoohi2022}.  The second-order term contains the Fourier-like temperature-gradient contribution
\begin{equation}
Q_{2i}=-\frac{5}{4}\gamma_2\frac{\partial \tau}{\partial x_i},
\label{eq:q2}
\end{equation}
and the third-order term contains additional nonlinear and velocity-curvature pieces, including
\begin{equation}
Q_{3i}= -\frac{5}{4}\gamma_2\frac{\partial \tau}{\partial x_i}
    -\frac{5}{4}\gamma_5\tau\frac{\partial \tau}{\partial x_i}
    +\frac{1}{2}\gamma_3\frac{\partial^2 u_i}{\partial x_j^2}+\cdots .
\label{eq:q3}
\end{equation}
The coefficients \(\gamma_2\), \(\gamma_3\) and \(\gamma_5\) are Sone's transport coefficients, determined by the molecular model and tabulated in \citet{Sone2007}.  In a step expansion or cavity shear layer, the higher-order velocity-curvature contribution can compete with the Fourier-like term and redirect heat lines towards a warmer region \citep{MahdaviRoohi2022}.  This explains the origin of anti-Fourier transport.  It does not answer whether the high-order closure variables responsible for that transport are identifiable.

\subsection{The two-dimensional heat-flux observable}

Let \(f=f(\xv,\vvec,t)\) be the mass-normalised molecular distribution, with \(\rho=\int f\,\dd\vvec\), \(\rho u_i=\int v_i f\,\dd\vvec\), and \(3\rho T=\int|\cv|^2f\,\dd\vvec\).  Here \(\xv\), \(\vvec\) and \(t\) are position, molecular velocity and time, \(\uv\) is the bulk velocity, \(T\) is the thermal-temperature variable in specific-energy units, and \(\cv=\vvec-\uv\) is the peculiar velocity.  The deviatoric stress is \(\sigma_{ij}=\int c_{\langle i}c_{j\rangle}f\,\dd\vvec\), where angular brackets denote the symmetric trace-free part; repeated Cartesian indices are summed.  The heat flux is the third-order moment
\begin{equation}
q_i=\frac{1}{2}\int |\cv|^2 c_i f\,\dd\vvec.
\label{eq:qdef}
\end{equation}
The fourth-order part of the heat-flux flux contains
\begin{equation}
B_{ij}=\int |\cv|^2 c_i c_j f\,\dd\vvec .
\label{eq:Bdef}
\end{equation}
This definition follows from the exact moment balance rather than from an algebraic closure assumption.  Let \(M_{ijk}=\int c_i c_j c_k f\,\dd\vvec\) be the raw third central moment, \(p=\rho T\) the hydrostatic pressure, \(D_t=\partial_t+u_j\partial_j\) the material derivative, and
\begin{equation}
\mathcal{C}_i=\frac12\int |\cv|^2c_i\,\mathcal{Q}(f,f)\,\dd\vvec
\label{eq:collision_q}
\end{equation}
be the collisional production of heat flux, where \(\mathcal Q(f,f)\) is the Boltzmann collision operator.  The route from the Boltzmann equation to the heat-flux balance is one change of velocity variable followed by one integration by parts, and it is displayed here so that every term of the result can be checked independently.  Writing the distribution as a function of \(\cv=\vvec-\uv(\xv,t)\), the time and space derivatives taken at fixed \(\vvec\) generate derivatives of \(\uv\) through the chain rule, and \(v_k=u_k+c_k\) splits the streaming term.  Multiplying by an arbitrary weight \(\varphi(\cv)\), integrating over velocity, and integrating the \(\partial/\partial c_k\) terms by parts gives the general central-moment transport identity
\begin{equation}
\partial_t\langle\varphi\rangle
+\partial_k\!\bigl(u_k\langle\varphi\rangle+\langle c_k\varphi\rangle\bigr)
+\bigl(D_tu_k\bigr)\!\left\langle\frac{\partial\varphi}{\partial c_k}\right\rangle
+\bigl(\partial_ju_k\bigr)\!\left\langle c_j\frac{\partial\varphi}{\partial c_k}\right\rangle
=\int\varphi\,\mathcal{Q}\,\dd\vvec,
\label{eq:central_transport}
\end{equation}
where \(\langle\psi\rangle=\int\psi f\,\dd\vvec\).  The acceleration term arises because \(\cv\) depends on \(\uv\); the velocity-gradient term collects the chain-rule contribution of \(\partial_ju_k\) together with the \(\delta_{jk}\langle\varphi\rangle\) piece produced by the integration by parts, and it is this piece that closes the advection into the conservative flux \(\partial_k(u_k\langle\varphi\rangle)\).  Setting \(\varphi=1\), \(c_i\) and \(\frac12|\cv|^2\) in \eqref{eq:central_transport} returns the continuity, momentum and energy balances, which fixes every sign.

For the heat flux the weight is \(\varphi=\frac12|\cv|^2c_i\), so \(\langle\varphi\rangle=q_i\) and \(\langle c_j\varphi\rangle=\frac12B_{ij}\).  Its velocity gradient is \(\partial\varphi/\partial c_k=c_ic_k+\frac12|\cv|^2\delta_{ik}\), and the two moments required by \eqref{eq:central_transport} follow directly from the definitions of \(p\), \(\sigma_{ij}\), \(M_{ijk}\) and \(q_i\):
\begin{equation}
\left\langle\frac{\partial\varphi}{\partial c_k}\right\rangle
=\sigma_{ik}+\frac52\,p\,\delta_{ik},
\qquad
\left\langle c_j\frac{\partial\varphi}{\partial c_k}\right\rangle
=M_{ijk}+q_j\delta_{ik}.
\label{eq:phi_moments}
\end{equation}
Substituting \eqref{eq:phi_moments} into \eqref{eq:central_transport} gives
\begin{equation}
\begin{split}
\partial_t q_i+\partial_j\!\left(u_jq_i+\frac12B_{ij}\right)
&+\left(\sigma_{ik}+\frac52p\delta_{ik}\right)D_tu_k\\
&+\left(M_{ijk}+\delta_{ik}q_j\right)\partial_ju_k
=\mathcal{C}_i .
\end{split}
\label{eq:exact_q_balance}
\end{equation}
Equation~\eqref{eq:exact_q_balance} is therefore an exact consequence of \eqref{eq:central_transport} and \eqref{eq:phi_moments}; no closure has been used.  The acceleration, velocity-gradient, advective and collision terms are lower-order or third-order couplings; the only fourth-order object is \(\frac12\partial_jB_{ij}\).  The factor \(1/2\) is retained in the balance while \(B_{ij}\) is defined as the fourth central moment itself.

The passage from \(B_{ij}\) to the R26 variables is a definition-driven identity, not an approximation.  Split the dyad into its symmetric trace-free and isotropic parts, \(c_ic_j=c_{\langle i}c_{j\rangle}+\frac13|\cv|^2\delta_{ij}\), where angular index brackets denote the trace-free part, so that
\begin{equation}
B_{ij}=\int|\cv|^2c_{\langle i}c_{j\rangle}f\,\dd\vvec
+\frac13\,\delta_{ij}\int|\cv|^4f\,\dd\vvec .
\label{eq:Bsplit}
\end{equation}
The two integrals are exactly the objects that the regularized hierarchy promotes to closure-level variables \citep{Struchtrup2005,GuEmerson2009}, and their near-equilibrium parts are removed by definition:
\begin{equation}
R^{\cl}_{ij}\equiv\int|\cv|^2c_{\langle i}c_{j\rangle}f\,\dd\vvec-7T\sigma_{ij},
\qquad
\Delta\equiv\int|\cv|^4f\,\dd\vvec-15\rho T^2 .
\label{eq:RDelta_def}
\end{equation}
The superscript ``\(\cl\)'' distinguishes this contracted deviatoric moment from the raw \(B_{ij}\).  For a local Maxwellian, \(\int|\cv|^4f\,\dd\vvec=15\rho T^2\) and the trace-free integral vanishes, so \(R^{\cl}_{ij}=\Delta=0\); for a Grad 13-moment distribution the trace-free integral equals \(7T\sigma_{ij}\), so \(R^{\cl}_{ij}\) and \(\Delta\) measure precisely the fourth-order content beyond the 13-moment description.  Substituting \eqref{eq:RDelta_def} into \eqref{eq:Bsplit} gives the exact decomposition
\begin{equation}
B_{ij}=5\rho T^2\delta_{ij}+7T\sigma_{ij}+R^{\cl}_{ij}+\frac{1}{3}\Delta\delta_{ij}.
\label{eq:Bdecomp}
\end{equation}
Every term of \eqref{eq:Bdecomp} except \(R^{\cl}_{ij}\) and \(\Delta\) is an explicit function of the lower-order fields \((\rho,T,\sigma_{ij})\), which possess their own observed balances.  The genuinely fourth-order unknowns therefore enter \eqref{eq:exact_q_balance} only in the combination \eqref{eq:Adef_intro}, \(A_{ij}=R^{\cl}_{ij}+\Delta\delta_{ij}/3\), and only through its divergence.  Hence, apart from the common factor \(1/2\) in \eqref{eq:exact_q_balance}, the fourth-order contribution visible to the in-plane heat-flux equations is
\begin{equation}
\mathcal{B}_i=(\Grad\cdot\Aten)_i=\partial_j A_{ij}.
\label{eq:Bobs}
\end{equation}
In a two-dimensional flow with \(\partial_z=0\) and \(u_z=0\), the observed components are
\begin{align}
\mathcal{B}_x &= \partial_x A_{xx}+\partial_y A_{xy},\label{eq:Bx}\\
\mathcal{B}_y &= \partial_x A_{xy}+\partial_y A_{yy}.\label{eq:By}
\end{align}
The component \(A_{zz}\) appears nowhere in \eqref{eq:Bx}--\eqref{eq:By}.  Yet \(A_{zz}\) contributes to the trace
\begin{equation}
\Delta = \tr \Aten = A_{xx}+A_{yy}+A_{zz},
\label{eq:traceA}
\end{equation}
and therefore changes the internal split
\begin{equation}
R^{\cl}_{ij}=A_{ij}-\frac{1}{3}\Delta\delta_{ij}.
\label{eq:RfromA}
\end{equation}
This observation already shows that the two-dimensional cavity obstruction is not the same as the one-dimensional shock obstruction.

\begin{proposition}[Planar heat-flux null space]
For a two-dimensional, three-velocity monatomic flow on physical domain \(\Omega\), the in-plane heat-flux equations observe \(\Grad\cdot\Aten\).  The in-plane observable is invariant under any perturbation \(\delta\Aten\) satisfying
\begin{equation}
\partial_j\delta A_{ij}=0,\qquad i=x,y.
\label{eq:divfree}
\end{equation}
This null space contains (i) an in-plane Airy freedom
\begin{equation}
\delta A_{xx}=\partial_{yy}\Phi,\qquad
\delta A_{xy}=-\partial_{xy}\Phi,
\qquad
\delta A_{yy}=\partial_{xx}\Phi,
\label{eq:airy}
\end{equation}
for any smooth scalar potential \(\Phi\), and (ii) an exactly invisible out-of-plane channel \(\delta A_{zz}=\eta(x,y)\), with \(\delta A_{xx}=\delta A_{xy}=\delta A_{yy}=0\) and arbitrary smooth \(\eta\).  The second freedom changes \(\Delta\) and \(R^{\cl}_{zz}\) without changing the in-plane heat-flux observable at all.
\end{proposition}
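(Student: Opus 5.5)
The argument runs in three steps: identify the observable from the exact balance, verify the Airy family by commuting derivatives, and verify that the out-of-plane channel drops out.

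\emph{Step 1: the observable.} Start from the exact heat-flux balance \eqref{eq:exact_q_balance} and the exact decomposition \eqref{eq:Bdecomp}. Every term in \eqref{eq:exact_q_balance} except \(\tfrac12\partial_jB_{ij}\) is a functional of \((\rho,\uv,T,\sigma_{ij},M_{ijk},q_i)\) alone. By \eqref{eq:Bdecomp},
\[
\partial_jB_{ij}=\partial_j\bigl(5\rho T^2\delta_{ij}+7T\sigma_{ij}\bigr)+\partial_jA_{ij}.
\]
Hence, with the lower-order fields held fixed, the fourth-order information entering the \(i=x,y\) equations is exactly \(\mathcal B_i=\partial_jA_{ij}\) of \eqref{eq:Bobs}. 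The linearity of \(\mathcal B_i\) in \(\Aten\) then shows that any \(\delta\Aten\) satisfying \eqref{eq:divfree} leaves \(\mathcal B_x,\mathcal B_y\), and therefore the in-plane balances, unchanged.

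One point needs care here: the perturbation must be admissible as a change of \(R^{\cl}\) and \(\Delta\) only. Since \(A_{ij}=R^{\cl}_{ij}+\Delta\delta_{ij}/3\) is an invertible reparametrisation of the pair \((R^{\cl}_{ij},\Delta)\) via \eqref{eq:traceA}--\eqref{eq:RfromA}, an arbitrary symmetric \(\delta\Aten\) corresponds to a unique pair \((\delta R^{\cl},\delta\Delta)\) with \(\delta R^{\cl}\) trace-free. Specifically, \(\delta\Delta=\tr\,\delta\Aten\) and \(\delta R^{\cl}_{ij}=\delta A_{ij}-\tfrac13\delta\Delta\,\delta_{ij}\). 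The lower-order fields are untouched.

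\emph{Step 2: the Airy family (i).} With \(\partial_z=0\), the divergence in \eqref{eq:Bx}--\eqref{eq:By} involves only \(j\in\{x,y\}\). Substituting \eqref{eq:airy} gives
\[
\partial_x\partial_{yy}\Phi-\partial_y\partial_{xy}\Phi=0,\qquad
-\partial_x\partial_{xy}\Phi+\partial_y\partial_{xx}\Phi=0,
\]
by equality of mixed partials for \(\Phi\in C^3\). The resulting tensor is symmetric by construction. If \(\delta A_{xz}=\delta A_{yz}=\delta A_{zz}=0\), the induced \(\delta\Delta=\nabla^2\Phi\) is generally nonzero, and \(\delta R^{\cl}\) follows from \eqref{eq:RfromA}. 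Because \(\Phi\) is arbitrary, the family is infinite-dimensional. Taking \(\Phi\in C_c^\infty(\Omega)\) makes \(\delta\Aten\) vanish in a neighbourhood of \(\partial\Omega\), so the wall trace is also preserved.

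\emph{Step 3: the out-of-plane channel (ii).} Set \(\delta A_{zz}=\eta\) with all other components zero. The in-plane components \(\partial_j\delta A_{xj}\) and \(\partial_j\delta A_{yj}\) contain no \(zz\) entry at all, so they vanish identically, even before using \(\partial_z=0\). Condition \eqref{eq:divfree} for \(i=x,y\) therefore holds trivially. By \eqref{eq:traceA}, \(\delta\Delta=\eta\), and by \eqref{eq:RfromA},
\[
\delta R^{\cl}_{zz}=\tfrac23\eta,\qquad \delta R^{\cl}_{xx}=\delta R^{\cl}_{yy}=-\tfrac13\eta.
\]
Thus both \(\Delta\) and \(R^{\cl}_{zz}\) change while \(\mathcal B_x,\mathcal B_y\) are untouched.

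The main obstacle is not the calculus but the precise meaning of ``invariant'' in Step 1. It must be stated that the invariance is of the in-plane flux-side observable with the lower-order moments fixed, not of the full R26 or Boltzmann state. I would make this explicit by noting that \(R^{\cl}\) and \(\Delta\) enter \eqref{eq:exact_q_balance} through no other term: the collision term \(\mathcal C_i\) is treated as part of the separately observed balance rather than re-expressed through the closure. This matches the scope declared in Table~\ref{tab:validation_hierarchy}.
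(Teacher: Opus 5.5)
Your proposal is correct and follows the paper's proof. The Airy family is checked by commuting mixed partials, the \(A_{zz}\) channel is shown to be absent from the \(i=x,y\) divergence, and infinite-dimensionality comes from the freedom in \(\Phi\); the paper makes this last point explicitly with \(C_c^\infty(\Omega)\). Your Step~1 reproduces the derivation the paper gives just before the proposition, and your explicit \(\delta R^{\cl}_{zz}=\tfrac23\eta\), \(\delta R^{\cl}_{xx}=\delta R^{\cl}_{yy}=-\tfrac13\eta\) and your caveat about the collision term are accurate refinements of the same argument rather than a different route.
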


\noindent\textit{Proof.}
Substitution of \eqref{eq:airy} gives
\begin{align}
\partial_x\delta A_{xx}+\partial_y\delta A_{xy}
 &=\partial_{xyy}\Phi-\partial_{yxy}\Phi=0,\\
\partial_x\delta A_{xy}+\partial_y\delta A_{yy}
 &=-\partial_{xxy}\Phi+\partial_{yxx}\Phi=0,
\end{align}
by commutation of mixed derivatives.  The out-of-plane perturbation is absent
from \eqref{eq:Bx}--\eqref{eq:By} because \(\partial_z=0\).  The space
\(C_c^\infty(\Omega)\), the space of smooth compactly supported functions, is infinite-dimensional whenever \(\Omega\) has
nonempty interior.  Hence the in-plane kernel is a function space, not
merely a finite algebraic ambiguity.  This proves non-injectivity of the
flux-side observation map.  It does not prove multiplicity of solutions of
the complete R26 boundary-value problem, whose remaining balances and wall
conditions supply additional information. \(\square\)

\begin{table}
\begin{center}
\small
\setlength{\tabcolsep}{5pt}
\renewcommand{\arraystretch}{1.12}
\begin{tabular}{p{0.25\textwidth}p{0.30\textwidth}p{0.34\textwidth}}
\toprule
 & One-dimensional shock & Two-dimensional cavity \\
\midrule
Observed heat-flux channel & Pointwise scalar channel \(S=A_{xx}\) & In-plane vector channel \((\Grad\cdot A)_x,(\Grad\cdot A)_y\) \\
\specialrule{0.25pt}{1.8pt}{1.8pt}
Hidden object & Algebraic split \(S=R^{\cl}_{xx}+\Delta/3\) & Tensor field \(A_{ij}\) itself, because only its divergence is observed \\
\specialrule{0.25pt}{1.8pt}{1.8pt}
Null space & One line: \((\delta R^{\cl}_{xx},\delta\Delta)=\eta(1,-3)\) & Function space: in-plane Airy tensors plus the exactly invisible \(A_{zz}\) channel \\
\specialrule{0.25pt}{1.8pt}{1.8pt}
What additional data must fix & One scalar complement, e.g. \(\Delta\) & Boundary/wall information or additional moment equations for \(A\), plus out-of-plane information \\
\specialrule{0.25pt}{1.8pt}{1.8pt}
Implication & Two-channel recovery is possible from \((S,\Delta)\) & Heat-flux direction cannot certify the internal R26-level closure state \\
\bottomrule
\end{tabular}
\caption{Comparison of the cavity and the shock test cases.  In one dimension the heat-flux budget determines a scalar fourth-order channel and leaves only an algebraic tensorial/scalar split.  In a two-dimensional, three-velocity cavity it observes only the divergence of the composite tensor \(A_{ij}=R^{\cl}_{ij}+\Delta\delta_{ij}/3\), leaving a divergence-free tensor function space and an exactly invisible out-of-plane component.}
\label{tab:schematic}
\end{center}
\end{table}

Table~\ref{tab:schematic} summarizes the resulting difference between the shock and cavity observability structures.  The 1D scalar-excess recovery \(R^{\cl}_{xx}=S-\Delta/3\) is therefore not simply waiting to be repeated in 2D.  A scalar-excess complement is still useful, but it cannot by itself determine a tensor field whose divergence is the only in-plane observed quantity.  Closing the two-dimensional null space requires information about \(A\) itself, for example through wall closure, boundary data, additional moment equations, or out-of-plane fourth-order information.

Airy modes are not specific to a square.  Any 2D domain with nonempty interior admits a compactly supported smooth potential,
\[
\Phi\in C_c^\infty(\Omega).
\]
For the unit square, an explicit family is
\begin{equation}
\begin{aligned}
\Phi(x,y)&=\Phi_0\,b_\epsilon(x)b_\epsilon(y),\\[-2pt]
b_\epsilon(s)&=
\begin{cases}
\exp\!\left[-\dfrac{1}{(s-\epsilon)(1-\epsilon-s)}\right],
& \epsilon<s<1-\epsilon,\\
0,&\text{otherwise}.
\end{cases}
\end{aligned}
\label{eq:boundary_bump}
\end{equation}
Here \(\Phi_0\) is an arbitrary amplitude and \(0<\epsilon<1/2\) is the wall-collar width.  Equation~\eqref{eq:airy} then produces a nonzero divergence-free symmetric perturbation that vanishes, with all derivatives, in a collar of every wall.  Figure~\ref{fig:airy_boundary} turns this operator statement into a concrete field.  Its first panel is the compactly supported potential; the next three panels show the vertical-tripole, four-lobed and horizontal-tripole patterns generated by \(\delta A_{xx}=\partial_{yy}\Phi\), \(\delta A_{xy}=-\partial_{xy}\Phi\) and \(\delta A_{yy}=\partial_{xx}\Phi\).  The perturbation is not trace-free: \(\delta A_{xx}+\delta A_{yy}=\nabla^2\Phi\), so it changes both \(\Delta\) and \(R^{\cl}_{ij}\) pointwise while leaving \(\partial_jA_{ij}\) unchanged.  The collar panel confirms that no flux-side wall trace is altered, and the last panel verifies the cancellation numerically; its maximum normalised residual, \(7.34\times10^{-4}\) on a \(501\times501\) grid, is finite-difference error because the continuum divergence vanishes identically.  Bumps of different centres, widths and amplitudes may be superposed, making the infinite-dimensional freedom explicit.  This construction establishes a boundary-compatible null space for the flux operator, but not compatibility with the coupled R26 wall map, the remaining R26 equations, or a nonnegative kinetic distribution.

The general multidimensional obstruction is also not restricted to 2D.  In three physical dimensions the special statement that \(A_{zz}\) is absent no longer applies, because all three components of \(\partial_jA_{ij}\) are observed.  Nevertheless the divergence map remains non-injective.  For example, for any smooth scalar \(\Phi\),
\begin{equation}
\delta A_{ij}=\delta_{ij}\nabla^2\Phi-\partial_i\partial_j\Phi
\label{eq:3d_null}
\end{equation}
is symmetric and satisfies \(\partial_j\delta A_{ij}=\partial_i\nabla^2\Phi-\partial_i\nabla^2\Phi=0\) identically.  Thus the exact out-of-plane mode is a feature of the 2D3V reduction, whereas flux-side non-observability of a tensor from its divergence persists in 3D.

In reduced-distribution formulations, the unused molecular-velocity component in a 2D3V monatomic calculation is often carried as a transverse or internal-energy-like kinetic degree of freedom.  The \(A_{zz}\) channel should be interpreted in that established dimensional-reduction sense.  The point here is not that this transverse degree of freedom is newly discovered; it is that it coexists with the in-plane differential null space and therefore cannot be certified by an in-plane heat-flux comparison.  This distinction is important when a reduced kinetic calculation is used to validate a full R26-level moment state.

\begin{center}
\begin{minipage}{\textwidth}
  \centering
  \includegraphics[width=1.0\textwidth]{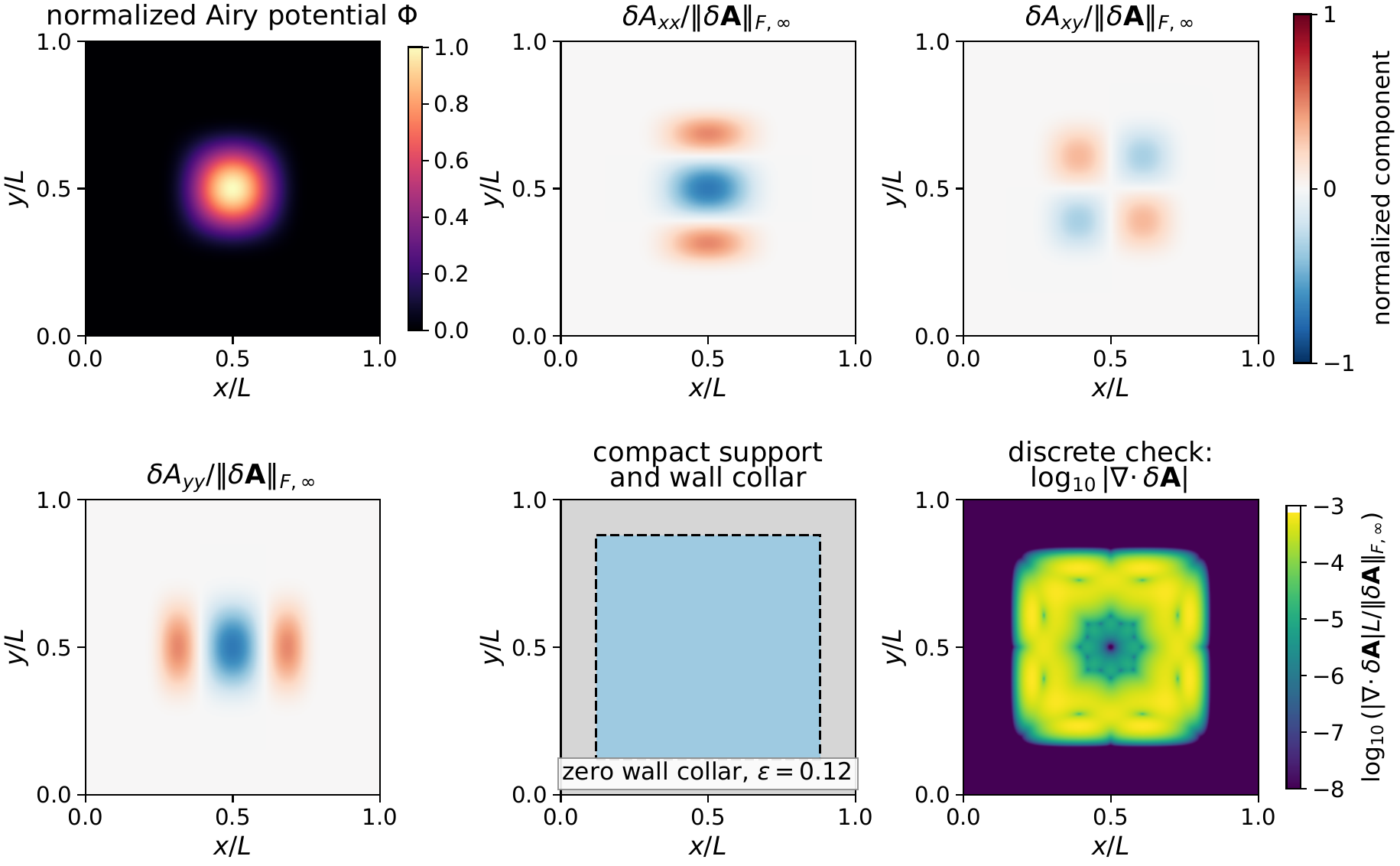}
  \captionof{figure}{Boundary-compatible Airy perturbation generated by \eqref{eq:boundary_bump}.  The potential and all three in-plane tensor components vanish in a finite collar of every wall.  The last two panels show the collar and a direct finite-difference check of \(\Grad\cdot\delta A=\boldsymbol 0\).  The small nonzero numerical residual (maximum \(7.34\times10^{-4}\) after normalisation on a \(501\times501\) grid) is differentiation error; the continuum identity follows exactly from commutation of mixed derivatives.}
  \label{fig:airy_boundary}
\end{minipage}
\end{center}

\section{DSMC cavity reference and diagnostics}
\label{sec:dsmc}

We use a two-dimensional physical-space, three-dimensional velocity DSMC reference for monatomic argon in a square lid-driven cavity of side length \(L\).  To make the kinetic and moment calculations directly comparable, every Knudsen number reported in the principal comparisons uses the Gu viscosity-based convention \citep{Bird1994,GuEmerson2009}
\begin{equation}
 \Kn_{\rm Gu}=\frac{\lambda_{\rm Gu}}{L},\qquad
 \lambda_{\rm Gu}=\frac{15}{2\sqrt{2}(5-2\omega)(7-2\omega)d_{\rm ref}^{2}n}
 \left(\frac{T}{T_{\rm ref}}\right)^{\omega-1/2}.
 \label{eq:kn_gu}
\end{equation}
In the exact moment balances above, the thermal variable is the energy-scaled temperature \(k_BT_{\rm abs}/m\); in the dimensional DSMC specification below, \(T\) denotes the absolute temperature \(T_{\rm abs}\), and its nondimensional form is scaled by \(T_w\).
Here \(\lambda_{\rm Gu}\) is the Gu mean free path, \(n\) is number density, \(d_{\rm ref}\) is the reference molecular diameter, \(T\) and \(T_{\rm ref}\) are the local and reference absolute temperatures, and \(\omega\) is the viscosity index.  All walls are diffuse and isothermal at \(T_w=300\,\mathrm{K}\); the top wall moves with speed \(U_w=100\,\mathrm{m\,s^{-1}}\).  The base case has \(\Kn_{\rm Gu}=0.05\), a \(160\times160\) sampling mesh, fixed collision subcells, an upstream-equivalent particle loading of 128 particles per cell, and eight independent stochastic realisations.  The argon variable-hard-sphere (VHS) parameters are \(m=6.6335\times10^{-26}\,\mathrm{kg}\), \(d_{\rm ref}=4.17\times10^{-10}\,\mathrm{m}\) at \(T_{\rm ref}=273\,\mathrm{K}\), and \(\omega=0.81\) \citep{Bird1994}.  Collisions are sampled with Bird's no-time-counter procedure.  The time step is kept below the cell-crossing and mean-collision-time scales; after transient removal, samples are accumulated at a fixed stride, giving 8501 samples per realisation and an average collision count of \(9.32\times10^4\) per cell in the ensemble used below.

Coordinates are nondimensionalised by \(L\), velocity by \(c_0=(k_BT_w/m)^{1/2}\), density by the equilibrium reference mass density \(\rho_0\), heat flux by \(\rho_0c_0^3\), the fourth-order tensor by \(\rho_0c_0^4\), and its divergence by \(\rho_0c_0^4/L\), where \(k_B\) is Boltzmann's constant.  For the transition case, \(c_0=249.8797\,\mathrm{m\,s^{-1}}\) and \(\rho_0c_0^3=8.9367\times10^6\,\mathrm{W\,m^{-2}}\).  Directional indicators and all headline ratios below are dimensionless.  The lower-rarefaction sensitivity campaign retains the documented VHS \(\omega=0.81\) inputs; the separate \(\Kn_{\rm Gu}=0.20\) model comparison uses a Maxwell-class variable-soft-sphere (VSS) transport law described in \S\ref{sec:kn020_matched}.

The sampled fields are \(\rho\), \(\uv\), \(T\), \(\qv\), \(\sigma_{ij}\), \(R^{\cl}_{ij}\), \(\Delta\), the composite tensor \(A_{ij}\), and all ten symmetric components of the direct third central moment \(M_{ijk}\).  When used below, \(P_{ij}=\int c_i c_j f\,\dd\vvec=\rho T\delta_{ij}+\sigma_{ij}\) denotes the central momentum-flux or pressure tensor.  The traceless R26 third-order moment is formed directly as
\begin{equation}
m_{ijk}=M_{ijk}-\frac{2}{5}\left(q_i\delta_{jk}+q_j\delta_{ik}+q_k\delta_{ij}\right),
\qquad M_{ill}=2q_i .
\label{eq:m_direct}
\end{equation}
Equation~\eqref{eq:m_direct} is the standard trace-free projection of a symmetric third-rank tensor: subtracting \(\tfrac15\left(M_{ill}\delta_{jk}+M_{jll}\delta_{ik}+M_{kll}\delta_{ij}\right)\) with \(M_{ill}=2q_i\) removes every trace.
The dataset contains every symmetric component required to form \(m_{ijk}\).  Complete R26-state validation nevertheless requires componentwise ensemble uncertainty, comparison with the R26 \(m_{ijk}\) field, evaluation of all coupled balances and wall maps, collision-model consistency, and a grid-converged R26 branch.  The fields \(R^{\cl}_{ij}\) and \(\Delta\) are direct fourth-central-moment estimators rather than reconstructions from \(m_{ijk}\).

We identify anti-Fourier regions using the cosine indicator
\begin{equation}
I_{AF}=\frac{\qv\cdot\Grad T}{|\qv|\,|\Grad T|}.
\label{eq:IAF}
\end{equation}
Positive $I_{AF}$ means that heat flux points towards increasing temperature.  Since the anti-Fourier diagnostic is itself directional, the relevant fourth-order contribution is the component of $(\nabla\cdot A)$ along the locally observed heat-flux direction.  We therefore use $(\boldsymbol q/|\boldsymbol q|)$ as the local projection direction and resolve the tensorial and scalar parts of this fourth-order channel as
\begin{equation}
P_R=\frac{\boldsymbol q\cdot(\nabla\cdot R^{\mathrm{cl}})}{|\boldsymbol q|},
\label{eq:PR}
\end{equation}
\begin{equation}
P_\Delta=\frac{\boldsymbol q\cdot(\nabla\Delta/3)}{|\boldsymbol q|}.
\label{eq:PD}
\end{equation}
These projections measure the signed tensorial and scalar contributions to the heat-flux-parallel fourth-order channel, rather than introducing an independent closure criterion.  On the anti-Fourier set we define
\[
\frac{P_\Delta}{P_R}\equiv
\frac{\langle P_\Delta^2\rangle_{AF}^{1/2}}
     {\langle P_R^2\rangle_{AF}^{1/2}},\qquad
\chi_\Delta\equiv\frac{|P_\Delta|}{|P_R|+|P_\Delta|},
\]
where the second expression is evaluated wherever its denominator is nonzero and \(\langle\cdot\rangle_{AF}\) denotes the equally weighted mean over active anti-Fourier cells.  Thus \(P_\Delta/P_R\) is an RMS projection ratio, not a pointwise quotient, and \(\langle\chi_\Delta\rangle_{AF}\) is the mean local scalar fraction.

Unless otherwise stated, fields are smoothed with a seven-point centred window before differentiating.  Let \(\Omega_0\) denote the stated analysis domain, \(M_{act}\) the set satisfying simultaneous lower bounds on \(|\qv|\) and \(|\Grad T|\), and \(M_{AF}=M_{act}\cap\{I_{AF}>0\}\).  The default lower bound is 0.05 relative to each domain maximum.  The notation \(|M|\) denotes physical area, evaluated by cell count on the uniform comparison grid.  We report three distinct support measures,
\begin{equation}
 f_{act|\Omega}=\frac{|M_{act}|}{{|\Omega_0|}},\qquad
 f_{AF|\Omega}=\frac{|M_{AF}|}{|\Omega_0|},\qquad
 f_{AF|act}=\frac{|M_{AF}|}{|M_{act}|}.
\label{eq:support_metrics}
\end{equation}
Only \(f_{AF|\Omega}\) is a physical-domain area fraction.  The conditional fraction \(f_{AF|act}\) answers a different question---how often counter-gradient alignment occurs after inactive cells have been removed---and can be large even when the active set is small.  This distinction is essential for the R13/R26 comparison below.

\subsection{Ensemble, grid and particle-loading sensitivity}
\label{sec:numerical_scope}

Five eight-realisation VHS ensembles were used to assess spatial-grid and particle-loading sensitivity at \(\Kn_{\rm Gu}=0.05\), \(U_w=100\,\mathrm{m\,s^{-1}}\) and \(T_w=300\,\mathrm{K}\).  The spatial sequence uses \(N=120\), 160 and 200 at 128 particles per cell, while the particle sequence uses 64, 128 and 256 particles per cell on the \(160^2\) grid.  Every realisation contains 8501 accumulated samples per cell and uses the same final physical time, collision law and viscosity index \(\omega=0.81\).  Forty final fields and their validation reports are retained.  Positivity, the traceless stress, \(R^{\cl}\) and \(m_{ijk}\) identities, the relation \(A_{ij}=R^{\cl}_{ij}+\Delta\delta_{ij}/3\), and physical/number-density scaling are checked before ensemble reduction.

Fieldwise comparisons use the \(160^2\), 128-particle ensemble mean as reference after mapping every candidate to a common \(160^2\) grid.  For this design study, \(E_X=\|X_{\rm design}-X_{\rm ref}\|_2/\|X_{\rm ref}\|_2\) denotes the relative RMS field difference on that common grid; the subscripts in table~\ref{tab:dsmc_grid_ppc} identify the field \(X\).  The \(120^2\) mapping requires linear boundary extrapolation at 636 of 25,600 target cell centres; its differences therefore combine discretisation, interpolation and residual Monte Carlo sampling effects and are not Richardson-error estimates.  The spread over the five designs is reported as a descriptive numerical envelope rather than a confidence interval.  Support quantities are conditional on the declared seven-point smoothing window and 0.05 activity threshold; differentiation, smoothing and threshold sensitivity is quantified independently for the transition case in \S\ref{sec:kn020_matched}.

\begin{table}
\begin{center}
\scriptsize
\setlength{\tabcolsep}{3.5pt}
\renewcommand{\arraystretch}{1.14}
\begin{tabular}{lrrrrrrrr}
\toprule
case & \(E_T\) (\%) & \(E_u\) (\%) & \(E_q\) (\%) & \(E_\sigma\) (\%) & \(E_R\) (\%) & \(E_\Delta\) (\%) & \(f_{AF|\Omega}\) & \(P_\Delta/P_R\)\\
\midrule
\(160^2\), 128 & -- & -- & -- & -- & -- & -- & 0.04766 & 0.04192\\
\specialrule{0.25pt}{1.6pt}{1.6pt}
\(120^2\), 128 & 0.044 & 0.892 & 12.54 & 2.97 & 16.96 & 76.30 & 0.04730 & 0.04855\\
\specialrule{0.25pt}{1.6pt}{1.6pt}
\(200^2\), 128 & 0.042 & 0.765 & 12.09 & 2.70 & 16.28 & 72.88 & 0.04348 & 0.03811\\
\specialrule{0.25pt}{1.6pt}{1.6pt}
\(160^2\), 64  & 0.059 & 1.096 & 17.45 & 3.84 & 23.31 & 104.94 & 0.05375 & 0.04801\\
\specialrule{0.25pt}{1.6pt}{1.6pt}
\(160^2\), 256 & 0.042 & 0.762 & 12.27 & 2.73 & 16.54 & 74.51 & 0.04004 & 0.04181\\
\bottomrule
\end{tabular}
\caption{Eight-realisation DSMC grid and particle-loading sensitivity.  Relative field differences use the \(160^2\), 128-particle ensemble mean as reference on a common \(160^2\) grid.  They include independent Monte Carlo scatter and are not Richardson-extrapolation errors. The large relative differences in \(\Delta\) reflect the weak fourth-order scalar signal and independent Monte Carlo noise in the ensembles; they therefore preclude any claim of pointwise convergence of \(\Delta\). Importantly, the projected scalar-to-tensor ratio, \(P_\Delta/P_R\), remains within \(0.038\)--\(0.049\) across all tested grid and particle-loading designs.}

\label{tab:dsmc_grid_ppc}
\end{center}
\end{table}

Across the tested designs, the relative differences in temperature, velocity and stress are 0.042--0.059\%, 0.762--1.096\% and 2.70--3.84\%, respectively.  The higher-order fields are more sampling-sensitive: the heat-flux and \(R^{\cl}\) differences are 12.1--17.5\% and 16.3--23.3\%, while the pointwise \(\Delta\) difference is 72.9--104.9\%.  Consequently, the local \(\Delta\) field is not described as pointwise converged.  Nevertheless, the principal projected conclusion is insensitive to all tested refinements: \(f_{AF|\Omega}=0.0400\)--0.0538, \(\langle I_{AF}\rangle_{AF}=0.308\)--0.342, and \(P_\Delta/P_R=0.0381\)--0.0485.  Thus the resolved anti-Fourier region and dominance of the tensorial fourth-order projection are robust, whereas the precise support and local scalar contribution remain sampling-sensitive.

The same spread provides a transparent descriptive uncertainty envelope about the \(160^2\), 128-particle baseline.  The maximum absolute departures are \(\pm0.00762\) for \(f_{AF|\Omega}\), \(\pm0.01946\) for \(\langle I_{AF}\rangle_{AF}\), and \(\pm0.00663\) for \(P_\Delta/P_R\), equal to 16.0\%, 5.95\% and 15.8\% of their respective baseline values.  These ranges combine independent-realisation scatter with the declared grid and particle-loading changes and are not confidence intervals.  The accumulation length is held fixed at 8501 samples per cell in this design comparison; the transition calculation instead uses 20,000 correlated samples per cell, and its processing sensitivity is reported separately below.  The design therefore does not independently estimate accumulation-length convergence, and the quoted uncertainty envelope is restricted to grid, particle loading and between-realisation scatter.

\section{DSMC evidence for tensorial-channel dominance}
\label{sec:driver}

For the authoritative \(160^2\), 128-particle reference, the active set occupies \(f_{act|\Omega}=0.27430\) of the physical domain and the anti-Fourier subset occupies \(f_{AF|\Omega}=0.047656\).  Conditional on activity, \(f_{AF|act}=0.17374\), with \(\langle I_{AF}\rangle_{AF}=0.32706\).  The scalar fourth-order projection is small relative to the tensorial projection, \(P_\Delta/P_R=0.041921\), and the mean local scalar fraction is \(\langle\chi_\Delta\rangle_{AF}=0.076884\).  The physical-domain support and conditional occurrence are reported separately because they answer different questions.

Figure~\ref{fig:dsmc_grid_ppc} and table~\ref{tab:dsmc_grid_ppc} show that this channel-level conclusion survives every tested grid and particle-loading refinement.  The anti-Fourier physical-domain fraction remains between 0.0400 and 0.0538, while \(P_\Delta/P_R\) never exceeds 0.04855.  We therefore describe tensorial dominance and anti-Fourier occurrence as robust over the tested designs.  We do not describe the exact support as universal, and the large \(E_\Delta\) values preclude a claim that the local scalar fourth-moment field is pointwise converged.

The transition-regime DSMC--R13--R26 comparison at \(\Kn_{\rm Gu}=0.20\) is analysed separately in \S\ref{sec:kn020_matched}.  It tests the primary physical fields without extrapolating the lower-rarefaction fourth-moment statistics.

\begin{figure}[t]
  \centering
  \includegraphics[width=1.0\textwidth]{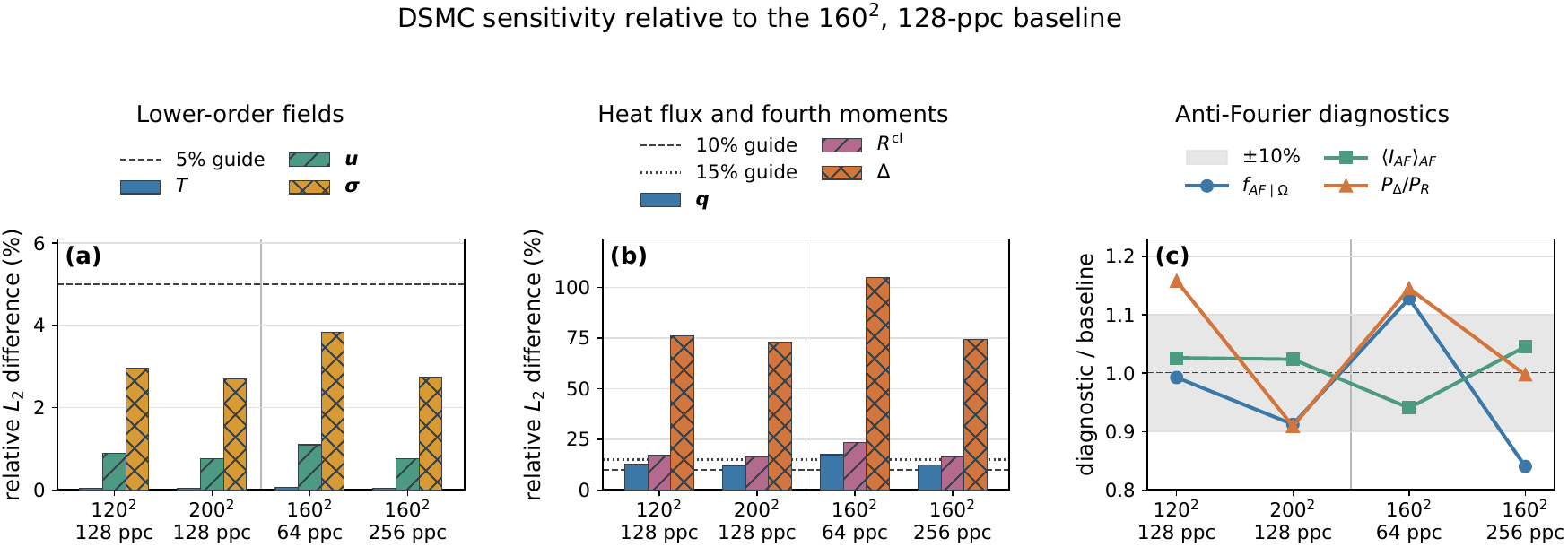}
  \caption{Eight-realisation VHS DSMC grid and particle-loading sensitivity relative to the explicit \(160^2\), 128-particle-per-cell baseline.  Panels (a,b) show relative field differences on the common \(160^2\) comparison grid; horizontal guides are diagnostic targets rather than formal error bounds.  Panel (c) reports the principal anti-Fourier metrics normalised by the baseline, with a \(\pm10\%\) guide.  Temperature, velocity and stress are stable, whereas \(q\) and \(R^{\cl}\) remain moderately sampling-sensitive and \(\Delta\) is not pointwise converged.  Nevertheless the projected tensorial dominance is robust: \(P_\Delta/P_R=0.0381\)--0.0485 across all five designs.  The \(120^2\) remap uses linear boundary extrapolation at 636 target points.}
  \label{fig:dsmc_grid_ppc}
\end{figure}

\section{Admissibility scope of diagnostic null-space perturbations}
\label{sec:nullspace}

The operator result can be tested by forming \(\Aten^*=\Aten+\alpha\,\delta\Aten\), where \(\delta\Aten\) is either the compactly supported Airy tensor in \eqref{eq:airy} or an out-of-plane \(A_{zz}\) mode.  The amplitude \(\alpha\) is an RMS normalisation relative to the composite tensor on the active set; it is not a fitted material parameter.  The continuum invariance of \(\Grad\cdot\Aten\) is independent of this normalisation.  For each shifted field one may reconstruct
\begin{equation}
\Delta^*=\tr \Aten^*,\qquad R^{\cl *}_{ij}=A^*_{ij}-\frac{1}{3}\Delta^*\delta_{ij},
\label{eq:shift_recover}
\end{equation}
and monitor the projected channel
\begin{equation}
P_A=\frac{\qv\cdot(\Grad\cdot\Aten)}{|\qv|}
\label{eq:PA}
\end{equation}
The Airy construction leaves \(P_A\) unchanged in the continuum, and the \(A_{zz}\) mode is exactly absent from the in-plane balance.  This statement does not require choosing a finite amplitude or asserting the existence of a second physical solution.

Two useful necessary checks illustrate the boundary of the claim.  Conditional on the existence of a nonnegative number-normalised distribution \(f^*\) realising the starred moments, let \(n=\int f^*\,\dd\vvec\) be its number density and let \(\theta\) be its thermal temperature, defined by \(3n\theta=\int|\cv|^2f^*\,\dd\vvec\).  The first check is the Cauchy--Schwarz consequence
\begin{equation}
    \left(\int |\cv|^2 f\,\dd\vvec\right)^2
    \leq \left(\int f\,\dd\vvec\right)
          \left(\int |\cv|^4 f\,\dd\vvec\right),
\end{equation}
which, with \(\int|\cv|^2f^*\,\dd\vvec=3n\theta\) and \(\int|\cv|^4f^*\,\dd\vvec=15n\theta^2+\Delta^*\), gives the scalar lower bound \(\Delta^*\geq -6n\theta^2\).  The second check is tensorial: for any vector \(\boldsymbol a\),
\begin{equation}
    a_i a_j B^*_{ij}=\int |\cv|^2(\boldsymbol a\cdot\cv)^2 f^*\,\dd\vvec \geq 0,
\end{equation}
so the contracted fourth-order matrix
\begin{equation}
    B^*_{ij}=5n\theta^2\delta_{ij}+7\theta\sigma_{ij}+A^*_{ij}
    \label{eq:Bpsd}
\end{equation}
must be positive semidefinite.  Here \(n\) and \(\theta\) denote the number-density and thermal-temperature variables used in the DSMC sampling; this is the same contracted fourth-order tensor as \(5\rho T^2\delta_{ij}+7T\sigma_{ij}+A_{ij}\) in the mass-normalised notation of \eqref{eq:Bdecomp}, up to the constant molecular-mass scaling.  These are necessary, not sufficient, realizability conditions.  If the unperturbed field satisfies strict positive margins, continuity guarantees that sufficiently small perturbations preserve these necessary inequalities.  It does not guarantee a nonnegative distribution, a Boltzmann solution, an R26 solution, or compatibility with the complete coupled wall map.  Finite-amplitude perturbations are therefore treated as observability diagnostics rather than physical states.

This is the central flux-side result.  Even in a tensorial-dominated anti-Fourier core, \(\Grad\cdot A\) does not identify \(A\), and \(A\) does not uniquely certify the internal \(R^{\cl}\)-\(\Delta\) split without further information.  The exact conclusion is one of non-certification by the heat-flux observable; quantifying the dynamically admissible subset requires the remaining moment equations, wall conditions and a sufficient kinetic-realizability construction.

\section{Implications for R13, R26 and kinetic-model validation}
\label{sec:discussion}

The present paper should be read as an R26-level observability diagnostic, not as a full validation of an R26 cavity solution.  R13 does not contain \(R_{ij}\) and \(\Delta\) as independent variables; in R13 they are slaved through regularized closures.  R26 promotes these quantities to independent closure-level variables.  Therefore the present question is intrinsically R26-level.  The reference higher moments are sampled from DSMC, while the moment-model calculations below are used as a separate diagnostic test of the anti-Fourier transport signature.

The published hierarchy provides precedent for both improvement and non-monotone ranking.  Additional moments improve the representation of Knudsen-layer structure in canonical shear and thermal problems \citep{GuEmersonTang2010,GuEmerson2014}, but the number of retained moments is not, by itself, an error-ordering theorem for every observable.

A full R26 calculation would of course include evolution equations and wall boundary conditions for the higher moments, and those equations may select one admissible closure state.  Our claim is narrower: a validation practice based on the heat-flux vector, its anti-Fourier direction, or the in-plane heat-flux balance cannot by itself certify that selected state.  This is an observability statement about the flux-side channel, not a claim that the full R26 system is unsolvable or non-unique.

\subsection{Diagnostic regularized-moment comparison}
\label{sec:model_diagnostic}

Independent R13 and steady nonlinear R26 calculations test whether regularized moment closures represent the counter-gradient heat-flux direction at \(\Kn_{\rm Gu}=0.05\) and 0.20, with \(T_w=300\,\mathrm{K}\) and \(U_w=100\,\mathrm{m\,s^{-1}}\).  Both values use equation~\eqref{eq:kn_gu}.  The R26 calculation uses the Gu--Emerson closure coefficients~\citep{GuEmerson2009}, diffuse walls and the same velocity normalization as the kinetic data.  The published closure coefficients retain their collision-model dependence, so the comparison tests the moment approximation as implemented rather than asserting identity of the underlying collision operators.

The R13 calculations use a Python implementation of the 17-field cavity formulation associated with \citet{RanaTorrilhonStruchtrup2013}.  The coefficient matrices and saddle-point structure follow the supplied MATLAB code, while the spatial discretisation enforces a conservative shared-face finite-volume continuity balance, a compatible mass constraint, defect-Newton linearisation, the printed two-point wall extrapolation and the tangential-stress effective-pressure convention.  No coefficient is fitted to the DSMC fields.  Conservative finite-volume algorithms and enhanced moment wall conditions provide the broader numerical antecedent \citep{GuEmerson2007}.  At \(\Kn_{\rm Gu}=0.05\), independent \(60^2\) and \(80^2\) solutions satisfy the residual, mass, continuity, positivity and wall-pressure checks.  Refinement between those grids changes \(q\), \(R^{\cl}\), \(m\) and \(\Delta\) by 14.1\%, 23.2\%, 7.52\% and 23.5\%, so the higher moments are interpreted as resolution-sensitive.  At \(\Kn_{\rm Gu}=0.20\), the \(60^2\) solution has a core relative residual of \(1.13\times10^{-14}\), positive density and temperature, and a maximum local continuity residual of \(4.77\times10^{-14}\).  The R13 fields are therefore diagnostic solutions of the documented formulation, not independent reproductions of every discretisation detail in the 2013 implementation.

The R26 calculations use an independent implementation of the nonlinear bulk equations, closure relations and smooth-wall conditions of \citet{GuEmerson2009}.  The solver reconstructs the complete three-dimensional symmetric trace-free tensors before applying the planar 17-variable reduction, which retains \(q_i\), \(R^{\cl}_{ij}\), \(m_{ijk}\) and \(\Delta\).  The implemented equations use the Gu--Emerson regularisation coefficient \(A_{\psi1}=1.698\) of the \(\psi\)-type closure terms, in their notation, the nonlinear source and the complete \(\phi\), \(\psi\) and \(\Omega\) regularisation terms, quotient gradients for density-normalised moments and one fixed geometric frame on each wall face; no undocumented scaled-divergence term is included.  The implementation contains no R13 fallback and no calibration to the DSMC fields.

At \(\Kn_{\rm Gu}=0.05\), the \(32^2\) and \(40^2\) R26 solutions differ by 1.09\% in velocity, 5.92\% in \(q\), 3.04\% in stress, 4.37\% in \(R^{\cl}\), 5.89\% in \(m\) and 9.70\% in \(\Delta\), with anti-Fourier-set Jaccard index 0.814.  The finer solution satisfies the conservation, positivity and residual tests, with maximum raw residual \(1.25\times10^{-12}\).  At \(\Kn_{\rm Gu}=0.20\), converged solutions are obtained on \(25^2\)--\(28^2\) grids.  The \(28^2\) solution has raw residual \(1.84\times10^{-12}\), and the \(25^2\)--\(28^2\) heat-flux change is 8.71\%.  The \(29^2\) iteration does not satisfy the residual tolerance and is omitted.  Table~\ref{tab:model_evidence} records the numerical basis of each comparison.

The residual tolerance for the transition R26 calculation is \(10^{-8}\); the \(28^2\) residual is more than three orders of magnitude below this value.

\begin{center}
\begin{minipage}{\textwidth}
\centering
\footnotesize
\setlength{\tabcolsep}{4pt}
\renewcommand{\arraystretch}{1.12}
\begin{tabular}{p{0.18\textwidth}p{0.15\textwidth}p{0.24\textwidth}p{0.34\textwidth}}
\toprule
source & operating point and resolution & higher-order information & role in the comparison \\
\midrule
DSMC & \(\Kn_{\rm Gu}=0.05\), \(120^2\)--\(200^2\), eight realisations per design & sampled \(R^{\cl}_{ij},\Delta\) and direct \(m_{ijk}\) & grid/particle sensitivity and fourth-order channel statistics \\
\specialrule{0.25pt}{1.8pt}{1.8pt}
DSMC & \(\Kn_{\rm Gu}=0.20\), \(160^2\), seven realisations & \(\rho,\uv,T,\qv,P_{ij},B_{ij}\) & ensemble transition comparison and finite-particle-corrected composite-\(A_{ij}\) diagnostic \\
\specialrule{0.25pt}{1.8pt}{1.8pt}
R13 & \(\Kn_{\rm Gu}=0.05\): \(60^2/80^2\); \(0.20\): \(60^2\) & \(R^{\cl}_{ij},m_{ijk},\Delta\) closure-implied & lower-rarefaction resolution evidence and transition diagnostic \\
\specialrule{0.25pt}{1.8pt}{1.8pt}
R26 & \(\Kn_{\rm Gu}=0.05\): \(32^2/40^2\); \(0.20\): \(25^2\)--\(28^2\) refinement & independent planar \(R^{\cl}_{ij},m_{ijk},\Delta\) & lower-rarefaction two-grid evidence and bounded transition-grid refinement \\
\bottomrule
\end{tabular}
\captionof{table}{Numerical basis and role of the fields used in the validation hierarchy.  The transition comparison uses a seven-realisation DSMC mean and the finest converged R26 solution under the stated Knudsen-number, wall and forcing conventions.}
\label{tab:model_evidence}
\end{minipage}
\end{center}

The transition-grid sequence provides a bounded refinement assessment rather
than an asymptotic extrapolation.  The \(25^2\), \(26^2\), \(27^2\) and
\(28^2\) solutions satisfy the same residual and positivity tolerances,
whereas the \(29^2\) iteration does not.  The \(25^2\)--\(28^2\) heat-flux
change is 8.71\%; the \(28^2\) field is therefore the finest solution used in
the physical comparison, with its remaining grid dependence stated explicitly.

\subsubsection{Numerical qualification and comparison metrics}

The independently initialized R13 refinements show decreasing, but still appreciable, higher-moment changes: from \(40^2\)--\(60^2\) to \(60^2\)--\(80^2\), the relative changes fall from 22.3\% to 14.1\% for \(q\), 39.6\% to 23.2\% for \(R^{\cl}\), 13.0\% to 7.52\% for \(m\), and 35.2\% to 23.5\% for \(\Delta\).  The low-order velocity and stress changes over the latter interval are 1.24\% and 1.47\%.  The R13 results are consequently retained as a diagnostic comparator rather than evidence of the general validity of R13.  The R26 \(32^2\)--\(40^2\) refinement at \(\Kn_{\rm Gu}=0.05\) changes velocity by 1.09\%, \(q\) by 5.92\%, stress by 3.04\%, \(R^{\cl}\) by 4.37\%, \(m\) by 5.89\% and \(\Delta\) by 9.70\%.

For every spatial comparison, the same predeclared physical mask is used for DSMC and both moment models.  In the notation below, subscripts \(M\) and \(D\) denote a moment model and DSMC, respectively; \(M_D\) is the DSMC-defined common mask, and the norm and inner product are equally weighted over its cells.  We use Jaccard and Dice overlap for topology and
\begin{equation}
\begin{aligned}
 E_X&=\frac{\|X_M-X_D\|_{M_D}}{\|X_D\|_{M_D}},\\
 C_q&=\frac{\langle\qv_M,\qv_D\rangle_{M_D}}{\|\qv_M\|_{M_D}\|\qv_D\|_{M_D}},
 &G_q&=\frac{\|\qv_M\|_{M_D}}{\|\qv_D\|_{M_D}}.
\end{aligned}
\label{eq:model_errors}
\end{equation}
for relative field error, vector correlation and RMS gain; \(X_M\) and \(X_D\) are the model and DSMC versions of the field \(X\).  The same definitions with \(\qv\) replaced by \(\uv\) give \(E_u\) and \(C_u\).  The mean angular difference \(\bar\vartheta_q\) is weighted by \(|\qv_M||\qv_D|\).  These definitions are used in every quantitative table below; binary topology metrics are always reported with their smoothing, activity threshold and corner exclusion.

\subsubsection{Maxwell-molecule comparison at \(\Kn_{\rm Gu}=0.05\)}
\label{sec:kn005_matched}

The \(\Kn_{\rm Gu}=0.05\) comparison uses a \(160^2\) DSMC calculation with the VSS transport representation of inverse-power-law (IPL) Maxwell molecules, \(\omega=1\), VSS angular-scattering parameter \(\alpha_{\rm VSS}=2.140\), diffuse walls and 256 simulator particles per cell.  R26 is evaluated on the converged \(40^2\) solution.  The R13 calculation uses a \(60^2\) Maxwell-molecule solution with fixed-point residual \(8.31\times10^{-12}\), positive density and temperature, and conservative mass balance.  Because an independent reproduction of the legacy 2013 discretisation is not available, this R13 field is used as a diagnostic comparator rather than as validation of that implementation.

On the fixed mask excluding the two upper corner squares, R26 reproduces the velocity field with \(E_u=0.0194\), \(C_u=0.99984\) and weighted angular error \(0.32^\circ\).  For the seven-cell-smoothed heat flux on the DSMC active mask, it gives \(E_q=0.212\), \(C_q=0.977\), RMS gain 0.970 and weighted angular error \(6.09^\circ\).  The corresponding diagnostic R13 values are \(E_u=0.0498\), \(C_u=0.9988\), and, for heat flux, \(E_q=0.847\), \(C_q=0.792\), gain 1.379 and angular error \(22.5^\circ\).  R26 therefore captures the transport vector strongly at this rarefaction level, whereas R13 captures the circulation much more accurately than the third-order heat transport.

Figure~\ref{fig:kn005_primary} provides the whole-field context for these norms: all three methods reproduce the single cavity circulation, but heat-flux magnitude separates R13 from DSMC and R26 more strongly than density, temperature or speed.  The six centreline cuts in figure~\ref{fig:kn005_centerlines} show where that separation occurs.  The velocity curves are nearly indistinguishable at line width, consistent with the small \(E_u\) values.  Density and temperature separate mainly near the lid, where the diagnostic R13 branch departs visibly from the DSMC symbols and overshoots the temperature maximum, while R26 remains much closer.  Heat flux is the decisive discriminator: in \(q_x(x{=}0.5,y)\), R13 develops a large positive excursion near the lid whereas R26 follows the DSMC sign change; in \(q_y(x,y{=}0.5)\), R13 overshoots both profile ends, consistent with its RMS gain of 1.379, while R26 tracks the DSMC trend.  The point-to-point scatter also explains why the quantitative comparison uses masked RMS measures rather than pointwise matching.  Figure~\ref{fig:kn005_antifourier} then makes the validation distinction explicit, and its two rows answer two different questions.  The upper row shows the heat-flux vectors on one magnitude scale: R26 is nearly indistinguishable from DSMC, whereas the R13 arrows are visibly too long over the interior---the field-level counterpart of its RMS gain of 1.379.  The lower row shows the indicator \(I_{AF}\) with each model's active counter-gradient set.  The DSMC set consists of a thin connected band hugging the lid, where the shear-driven velocity-curvature contribution of \eqref{eq:q3} overcomes the Fourier term, together with a fringe of statistical speckle at small \(|\qv|\,|\Grad T|\).  R26 reproduces the connected band in the correct place and with the correct sense, but as a smooth deterministic field its set is wider and its boundary sits where the thresholded indicator crosses zero, so the binary overlap is only moderate (Jaccard 0.344, Dice 0.511).  The diagnostic R13 set instead collapses to two corner lobes (0.247, 0.396): along most of the lid span its overshooting near-wall heat flux remains aligned with the temperature gradient, so counter-gradient transport is detected only where the corner singularities force it.  Occurrence therefore survives in all three fields, vector accuracy ranks the closures sharply, and binary topology is the level at which even the better closure agrees only moderately---exactly the separation that table~\ref{tab:validation_hierarchy} formalises.

Why does the diagnostic R13 branch degrade in this way while R26 does not?  The two closures differ at exactly the point that the balance \eqref{eq:exact_q_balance} isolates: how the composite tensor \(A_{ij}\) is supplied.  R13 does not evolve \(R^{\cl}_{ij}\) and \(\Delta\); they are slaved to gradients of the transported moments through the regularisation closure, which for Maxwell molecules reads, to leading (linear) order \citep{StruchtrupTorrilhon2003,Struchtrup2005},
\begin{equation}
A^{\rm R13}_{ij}=-\frac{\mu}{\rho}\left[\frac{24}{5}\,\partial_{\langle i}q_{j\rangle}+4\,(\partial_kq_k)\,\delta_{ij}\right],
\qquad
m^{\rm R13}_{ijk}=-2\,\frac{\mu}{\rho}\,\partial_{\langle k}\sigma_{ij\rangle},
\label{eq:r13_slaving}
\end{equation}
with nonlinear corrections that remain algebraic in the lower moments.  Equation~\eqref{eq:r13_slaving} is the leading term of a gradient expansion and is therefore accurate where the fields vary on the geometric scale \(L\).  In the wall and corner layers of a cavity the relevant scale is \(\lambda\); the slaved terms are then evaluated at gradients of order \(1/\lambda\), formally outside the expansion that produced them, and the resulting closure error feeds into the heat-flux balance through its only fourth-order term, \(\tfrac12\partial_jA_{ij}\) in \eqref{eq:exact_q_balance}.  Because \(\partial_jA^{\rm R13}_{ij}\) consists of second derivatives of \(\qv\), this feedback acts as a local diffusion of heat flux and cannot represent the additional exponential Knudsen-layer modes of the kinetic solution, whereas R26 promotes \(m_{ijk}\), \(R^{\cl}_{ij}\) and \(\Delta\) to balance-equation variables with their own wall conditions---precisely the structure shown to add Knudsen-layer content in Kramers, velocity-defect and temperature-jump problems \citep{GuEmersonTang2010,GuEmerson2014,Torrilhon2016}.

This one structural difference orders every signature in figures~\ref{fig:kn005_primary}--\ref{fig:kn005_antifourier}.  Velocity is insulated because its balance contains only \(\partial_j\sigma_{ij}\), and \(\sigma_{ij}\) obeys a full transport equation in both systems, so \(E_u\) is small for both closures.  Heat flux is the first observable whose balance touches the closure level, so it is the first casualty, and its errors concentrate where the slaving is least valid: R13's tangential-\(q\) excursion, temperature-maximum overshoot and amplitude gain in figure~\ref{fig:kn005_centerlines} all sit within the lid layer, while the bulk profiles of every field agree.  Both closures share the correct Navier--Stokes--Fourier limit, so this is not a bulk-transport-coefficient discrepancy.  Finally, the anti-Fourier \emph{direction} is driven by the tensorial channel that already exists at lower order (\(P_\Delta/P_R\approx 0.04\) in \S\ref{sec:driver}), so occurrence survives even in R13; what collapses is the placement of the set, because a misrepresented lid layer rotates \(\qv\) back towards Fourier alignment over most of the span.  These mechanisms are the established explanation for the R13-to-R26 accuracy gap in wall-bounded flows; for the diagnostic branch computed here they are consistent with every observed signature, although, as stated above, that branch is not an independently validated reproduction of the 2013 implementation, so we attribute the pattern to the closure structure rather than certify the particular code.

\begin{landscapefigurepage}
  \makebox[\linewidth][c]{\includegraphics[width=1.18\linewidth,height=0.80\textheight,keepaspectratio]{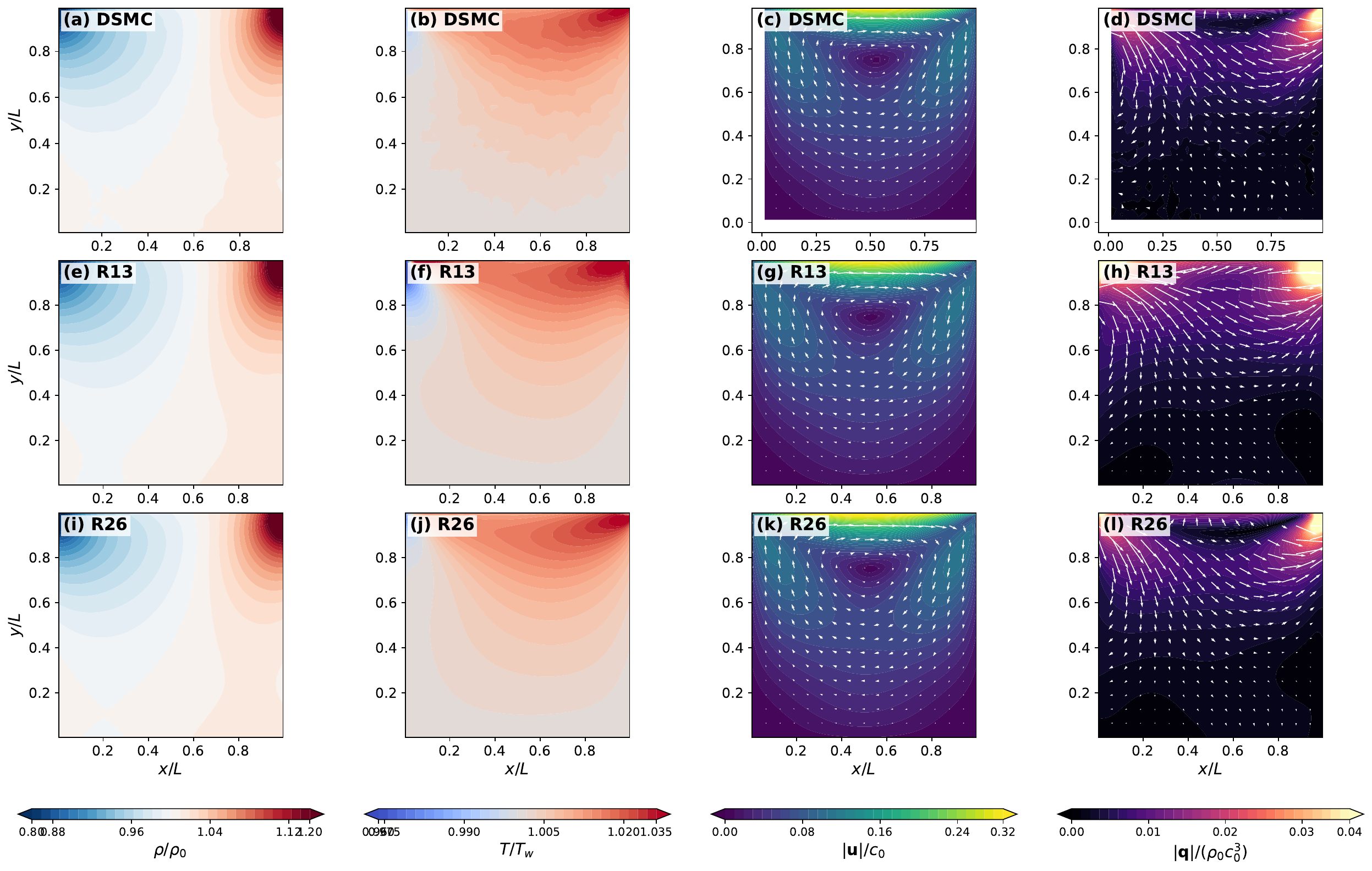}}
  \captionof{figure}{Primary fields at \(\Kn_{\rm Gu}=0.05\) for DSMC, diagnostic R13 and R26.  Columns show density, temperature, speed and heat-flux magnitude on shared scales.  The DSMC contours are direct four-cell spatial-bin averages; no fitted smoothing is used.}
  \label{fig:kn005_primary}
\end{landscapefigurepage}

\begin{landscapefigurepage}
  \makebox[\linewidth][c]{\includegraphics[width=1.18\linewidth,height=0.80\textheight,keepaspectratio]{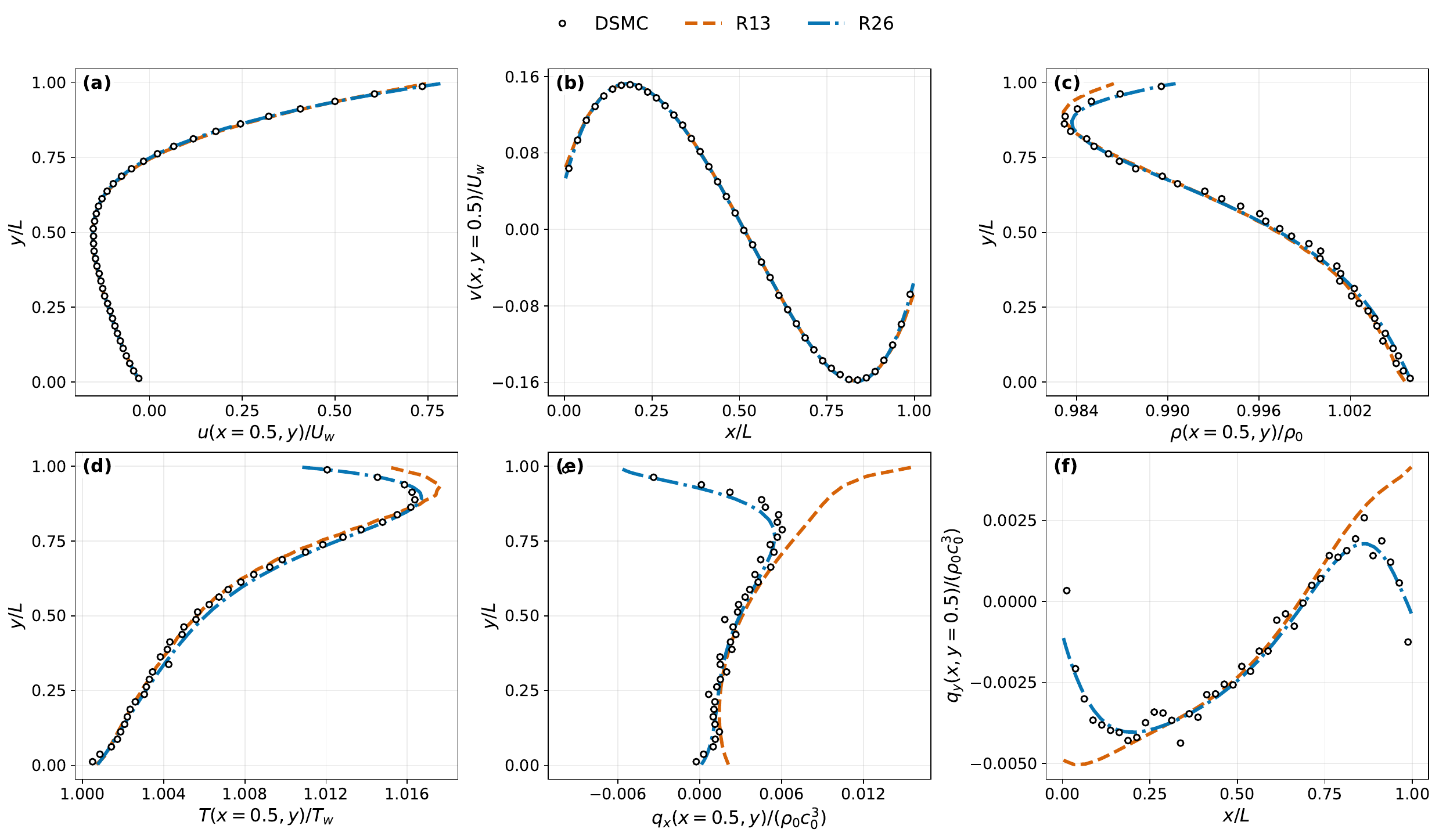}}
  \captionof{figure}{Six centreline profiles at \(\Kn_{\rm Gu}=0.05\): \(u(x{=}0.5,y)\), \(v(x,y{=}0.5)\), \(\rho(x{=}0.5,y)\), \(T(x{=}0.5,y)\), \(q_x(x{=}0.5,y)\) and \(q_y(x,y{=}0.5)\), in the nondimensional units of \S\ref{sec:dsmc}.  DSMC symbols are direct four-cell bin averages over a central strip of width \(0.05L\); R13 and R26 are evaluated over the same strip.  No fitted or polynomial smoothing is applied.}
  \label{fig:kn005_centerlines}
\end{landscapefigurepage}

\begin{landscapefigurepage}
  \makebox[\linewidth][c]{\includegraphics[width=1.18\linewidth,height=0.80\textheight,keepaspectratio]{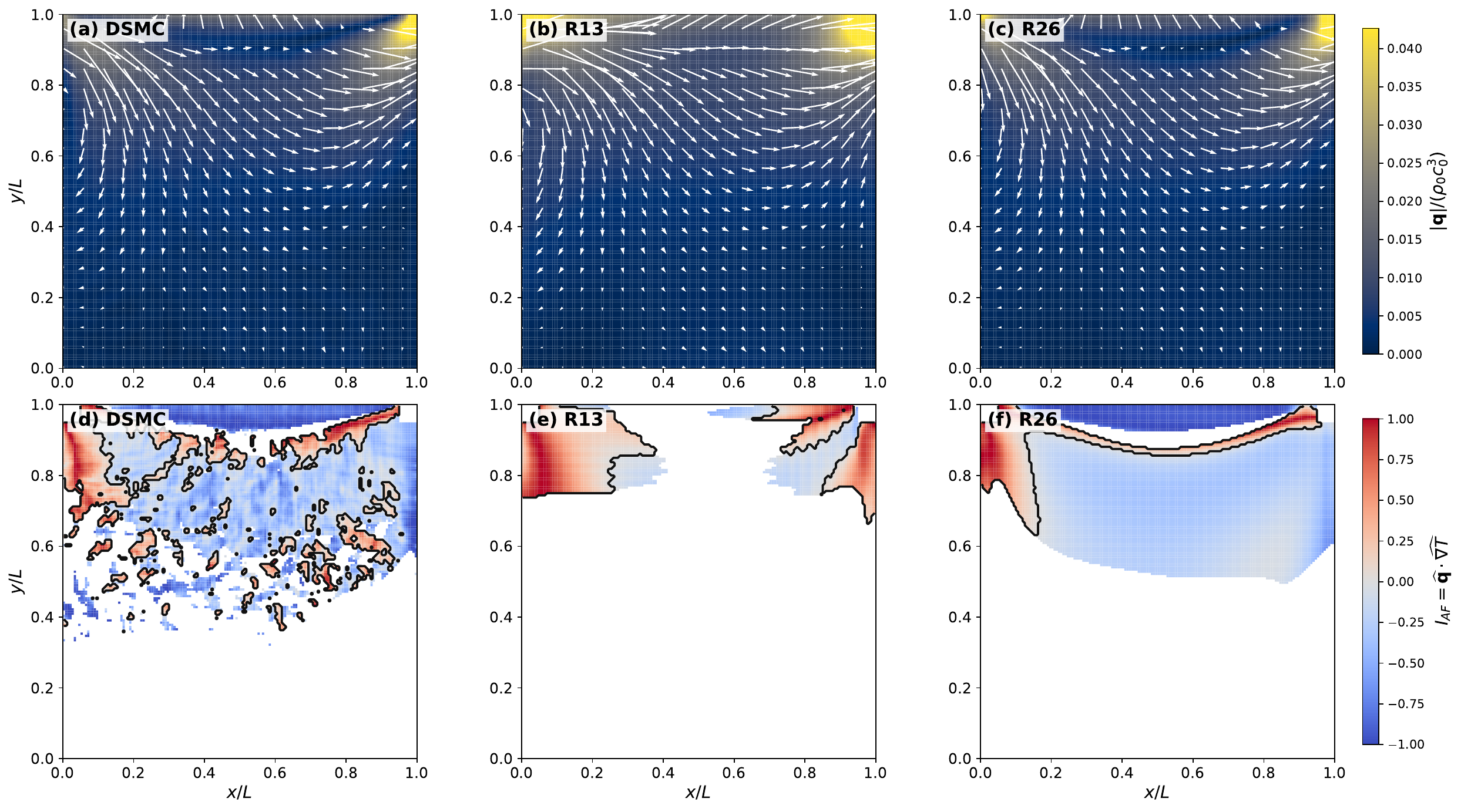}}
  \captionof{figure}{Heat-flux vectors and anti-Fourier indicator at \(\Kn_{\rm Gu}=0.05\).  The upper row uses one heat-flux-magnitude scale.  The lower row shows \(I_{AF}=\widehat{\qv}\bcdot\widehat{\Grad T}\); black contours bound each model's active counter-gradient set under the declared seven-cell filter and 5\% activity cuts.  R26 captures the dominant direction strongly, but its anti-Fourier Jaccard and Dice overlaps are only 0.344 and 0.511.  The diagnostic R13 overlaps are 0.247 and 0.396.  Hence good vector agreement does not imply pointwise recovery of the binary spatial topology.}
  \label{fig:kn005_antifourier}
\end{landscapefigurepage}

\subsubsection{Transition-regime comparison at \(\Kn_{\rm Gu}=0.20\)}
\label{sec:kn020_matched}

The transition-regime DSMC calculation uses the same square cavity, wall temperature and lid speed as the moment calculations, and its density was set from equation~\eqref{eq:kn_gu}: \(n_0=8.63460\times10^{24}\,\mathrm{m^{-3}}\).  Independent recomputation gives \(\lambda_{\rm Gu}/L=0.200000\).  The \(160^2\) grid has \(\Delta x/\lambda_{\rm Gu}=0.03125\), contains 256 simulator particles per cell (6,553,600 in total), and uses \(\Delta t/\tau_{\rm coll}=2.90\times10^{-3}\), where \(\tau_{\rm coll}\) is the reference mean collision time.  Seven independent realisations were run with the Maxwell--VSS transport class (\(\omega=1\), \(\alpha_{\rm VSS}=2.140\)), full diffuse accommodation and the same viscosity calibration as the moment comparison.  Each realisation used 200,000 warm-up steps followed by two million production steps, ten production blocks, 2000 samples per cell per block and 20,000 accumulated samples per cell.  The final comparison uses their arithmetic mean.  Relative standard errors, evaluated as the fieldwise RMS standard error divided by the RMS signal, are 0.37\% for velocity, 1.35\% for the temperature perturbation and 2.10\% for heat flux.  The Maxwell--VSS model matches the Maxwell-molecule transport class; it is not represented as the exact inverse-power-law angular kernel.

The R13 calculation on a \(60^2\) grid uses \(\Kn_{\rm Rana}=0.1595769122\), which gives \(\Kn_{\rm Gu}=\sqrt{\pi/2}\,\Kn_{\rm Rana}=0.20\); the factor follows because combining \eqref{eq:kn_gu} with the VHS viscosity of \citet{Bird1994} gives, for Maxwell molecules (\(\omega=1\)), \(\lambda_{\rm Gu}=\sqrt{\pi/2}\,\mu/(\rho\sqrt{\theta})\) with dynamic viscosity \(\mu\) and \(\theta=k_BT/m\), whereas the 2013 convention uses \(\lambda=\mu/(\rho\sqrt{\theta})\).  The R26 sequence converges on the \(25^2\), \(26^2\), \(27^2\) and \(28^2\) grids, whereas the \(29^2\) iteration does not satisfy the residual tolerance.  The \(25^2\)--\(28^2\) heat-flux change is 8.71\%, so \(28^2\) is used as the finest solution and the remaining grid dependence is stated explicitly.  R13 wall values are evaluated from its documented wall map before bilinear interpolation, and R26 velocity, heat flux and fourth moments are converted from \(\sqrt{208T_w}\) to the DSMC basis \(c_0=\sqrt{k_BT_w/m_{\rm Ar}}\), where \(m_{\rm Ar}=m\) is the argon molecular mass: the R26 code nondimensionalises with the rounded argon specific gas constant \(208\,\mathrm{J\,kg^{-1}\,K^{-1}}\), while \(k_B/m_{\rm Ar}=208.13\,\mathrm{J\,kg^{-1}\,K^{-1}}\), so the velocity factor is \(\sqrt{208/208.13}=0.999681\) and the heat-flux factor is its cube, \(0.999043\).  Thus the fields in figures~\ref{fig:kn020_primary}--\ref{fig:kn020_fourth_order} have the same Knudsen-number convention, forcing, coordinates and nondimensional scales.

Figure~\ref{fig:kn020_primary} shows that R26 follows the DSMC density, temperature and circulation structure more closely than diagnostic R13, while heat flux remains the visibly discriminating field.  The four columns make the hierarchy of table~\ref{tab:kn020_fields} visible: the R13 temperature field is strongly distorted towards the lid, its density perturbation is over-amplified near the lid corners, and its heat-flux magnitude pattern departs from DSMC over much of the domain, whereas the R26 panels sit close to the DSMC row in all four columns.  Figure~\ref{fig:kn020_centerlines} localises the remaining differences with the ensemble uncertainty made explicit: the \(\pm2\) standard-error bands are barely wider than the DSMC mean curves, the velocity profiles of both closures fall essentially inside them, and the temperature centreline of R13 leaves the band over most of the cavity height while R26 stays close to it.  The heat-flux panels carry the discrimination again, now with a quantitative floor: R13's \(q_y\) has the wrong sign over part of the profile, and R26, although it follows every sign change, exits the band in the upper part of the cavity---the pointwise counterpart of a whole-field error nineteen times the sampling resolution.  At this rarefaction the R26--DSMC heat-flux discrepancy is therefore a resolved model signal, not noise.

\begin{landscapefigurepage}
  \makebox[\linewidth][c]{\includegraphics[width=1.18\linewidth,height=0.80\textheight,keepaspectratio]{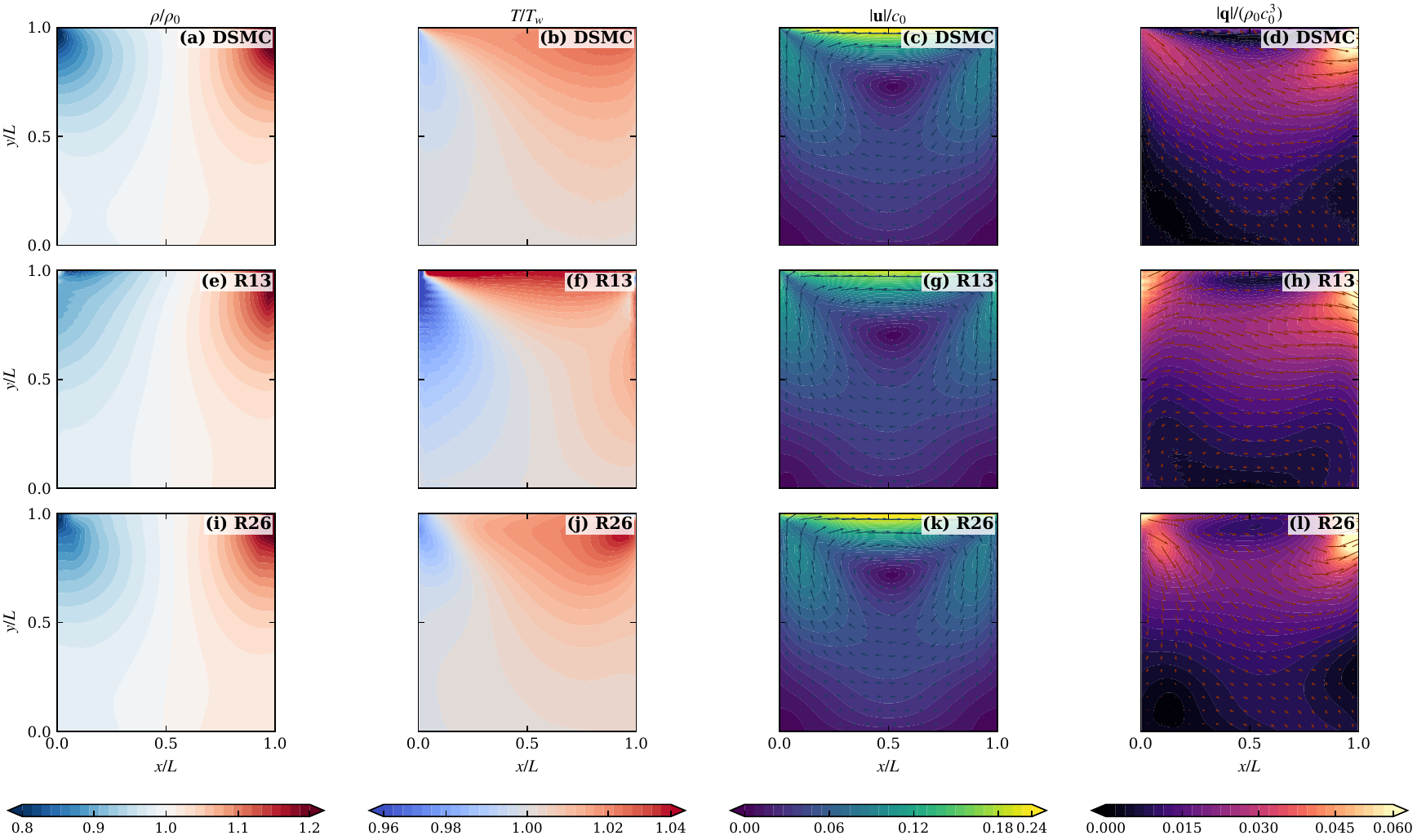}}
  \captionof{figure}{Primary fields at \(\Kn_{\rm Gu}=0.20\) from the seven-realisation Maxwell--VSS DSMC mean, diagnostic R13 and R26.  Columns show density, temperature, speed and heat-flux magnitude.  Diverging colour scales resolve departures of density and temperature from equilibrium; speed and heat flux use perceptually separated sequential maps.}
  \label{fig:kn020_primary}
\end{landscapefigurepage}

\begin{landscapefigurepage}
  \makebox[\linewidth][c]{\includegraphics[width=1.18\linewidth,height=0.80\textheight,keepaspectratio]{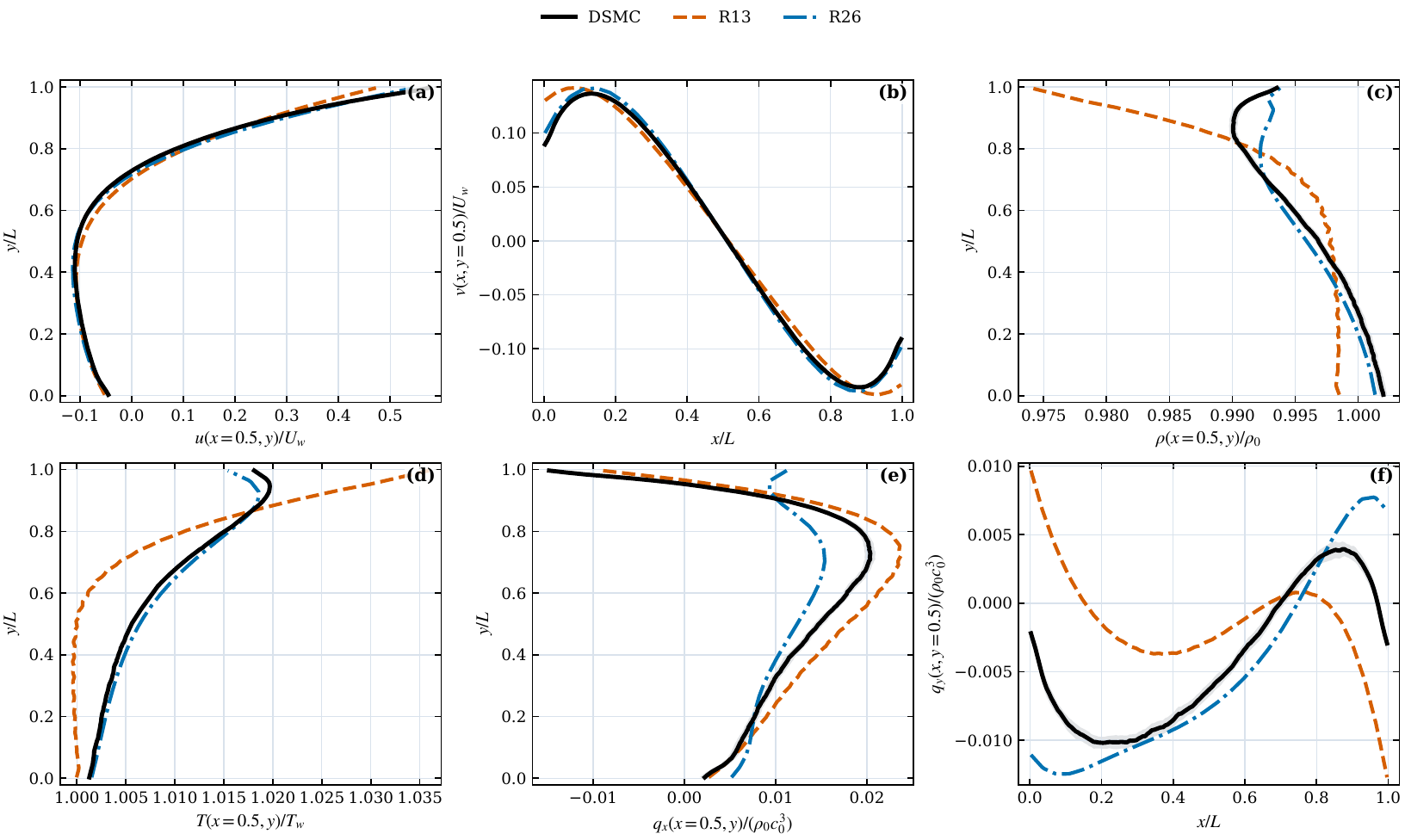}}
  \captionof{figure}{Six centreline profiles at \(\Kn_{\rm Gu}=0.20\).  Black curves are the seven-realisation DSMC mean and grey bands are \(\pm2\) ensemble standard errors; R13 and R26 are plotted without display smoothing.  The velocities agree closely, whereas heat flux distinguishes the models and retains visible wall-layer differences.}
  \label{fig:kn020_centerlines}
\end{landscapefigurepage}

For whole-field norms, the two top-corner squares of side \(0.05L\) are omitted and the remaining 25,472 DSMC cell centres are weighted equally.  In the nondimensional fields, \(\rho'=\rho-1\) and \(T'=T-1\) denote departures from the equilibrium reference state.  The quantities \(E_{\rho'}\) and \(E_{T'}\) normalize the model difference by the RMS DSMC perturbation from equilibrium; \(C_{\rho'}\) and \(C_{T'}\) are the corresponding correlations.  The velocity and heat-flux quantities use equation~\eqref{eq:model_errors}.  Table~\ref{tab:kn020_fields} shows a consistent hierarchy.  R26 reduces the density-perturbation error from 0.203 to 0.077, the temperature-perturbation error from 0.798 to 0.245, the velocity error from 0.160 to 0.084 and the heat-flux-vector error from 0.650 to 0.407.  Its heat-flux correlation is 0.922 and its weighted directional error is \(12.5^\circ\), compared with 0.812 and \(27.4^\circ\) for R13.  The R26 heat-flux discrepancy is approximately 19 times the 2.10\% relative ensemble standard error and is therefore not attributable to DSMC sampling uncertainty.  This is a case-specific improvement, not pointwise equality or a monotone theorem for increasing moment order.

\begin{center}
\begin{minipage}{\textwidth}
\centering
\small
\setlength{\tabcolsep}{4.5pt}
\renewcommand{\arraystretch}{1.14}
\begin{tabular}{lrrrrrrrrr}
\toprule
model & \(E_{\rho'}\) & \(C_{\rho'}\) & \(E_{T'}\) & \(C_{T'}\) & \(E_u\) & \(C_u\) & \(E_q\) & \(C_q\) & \(\bar\vartheta_q\) \\
\midrule
R13 & 0.203 & 0.979 & 0.798 & 0.864 & 0.160 & 0.988 & 0.650 & 0.812 & \(27.4^\circ\) \\
\specialrule{0.25pt}{1.8pt}{1.8pt}
R26 & 0.077 & 0.997 & 0.245 & 0.956 & 0.084 & 0.997 & 0.407 & 0.922 & \(12.5^\circ\) \\
\bottomrule
\end{tabular}
\captionof{table}{Field errors at \(\Kn_{\rm Gu}=0.20\) relative to the seven-realisation DSMC mean.  Lower \(E\) and angle, and higher correlation \(C\), indicate closer agreement.  Density and temperature errors refer to departures from equilibrium.  R26 uses the \(28^2\) solution.}
\label{tab:kn020_fields}
\end{minipage}
\end{center}

The anti-Fourier comparison uses the declared seven-cell uniform smoothing, 5\% \(|\qv|\) and \(|\Grad T|\) activity cuts and a \(0.05L\) top-corner exclusion.  On the DSMC-defined active mask, R26 has \(E_q=0.389\), \(C_q=0.928\), RMS gain 1.038 and weighted angular error \(12.3^\circ\); the corresponding R13 values are 0.618, 0.824, 1.072 and \(27.1^\circ\).  The anti-Fourier-set Jaccard index is 0.547 for R26 and 0.091 for R13, with Dice coefficients 0.707 and 0.167 (figure~\ref{fig:kn020_antifourier}).  Sweeping the smoothing width from 3 to 11 cells and the activity fraction from 0.03 to 0.10 preserves the vector hierarchy: \(C_q=0.906\)--0.934, \(E_q=0.379\)--0.486 and angular error \(11.8^\circ\)--\(14.3^\circ\) for R26, compared with 0.715--0.829, 0.605--0.773 and \(26.9^\circ\)--\(39.9^\circ\) for R13.  Topology remains threshold-sensitive: the Jaccard index spans 0.215--0.666 for R26 and 0.058--0.158 for R13.  Thus occurrence and the R26--R13 vector ranking are robust, whereas the precise binary support is a processing-dependent statistic.

\begin{landscapefigurepage}
  \makebox[\linewidth][c]{\includegraphics[width=1.16\linewidth,height=0.72\textheight,keepaspectratio]{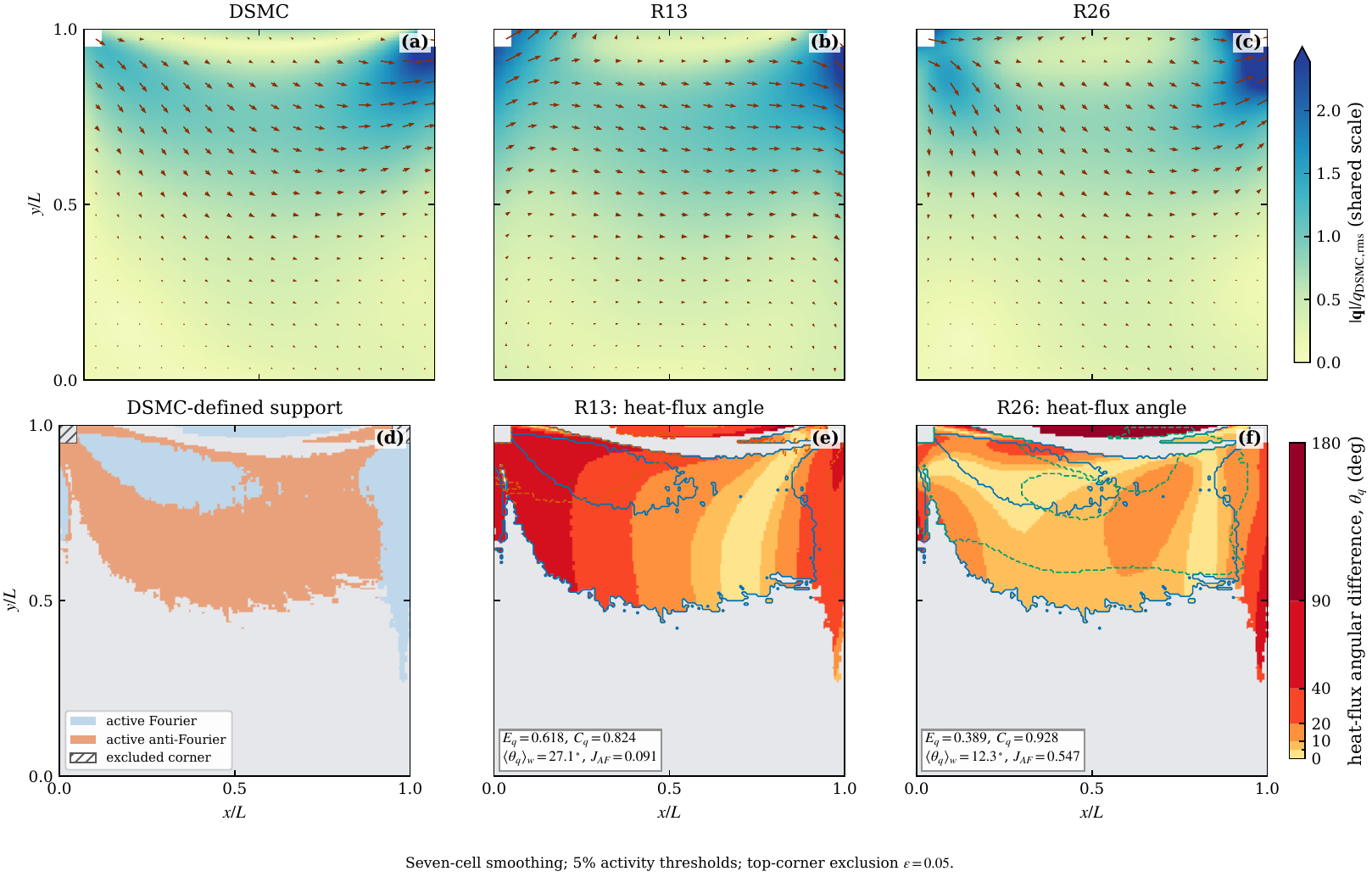}}
  \captionof{figure}{Heat-flux-vector and anti-Fourier comparison at \(\Kn_{\rm Gu}=0.20\) using the seven-realisation DSMC mean, diagnostic R13 and the R26 \(28^2\) solution.  The upper row shows heat-flux magnitude on a shared DSMC root-mean-square scale.  Panel (d) defines the DSMC support; panels (e,f) map local angular differences.  The displayed metrics use seven-cell smoothing, 5\% activity thresholds and the hatched \(0.05L\) top-corner exclusion.  R26 captures direction and topology substantially better than R13, without reproducing every wall-layer detail.}
  \label{fig:kn020_antifourier}
\end{landscapefigurepage}

\subsubsection{Measured fourth-order null-space content}
\label{sec:kn020_projection}

The transition DSMC export also contains the mass-normalised momentum-flux
(pressure-tensor) components \(P_{xx},P_{xy},P_{yy}\) and \(P_{zz}\), where
\(P_{ij}=\int c_i c_j f\,\dd\vvec=\rho\Sigma_{ij}\) defines the
peculiar-velocity covariance \(\Sigma_{ij}\).  The sampled central Sonine set
is \(B_{xx},B_{xy},B_{yy}\) and \(B_{zz}\), formed after subtracting the
instantaneous cell centre of mass and therefore carrying a finite-particle
central-moment bias.  We remove this bias before evaluating
equation~\eqref{eq:Bdecomp}.  For a cell containing \(N\) independent
simulator particles, the biased sample fourth central moment \(m^{(4)}_{ij}\) obeys
\begin{equation}
\begin{aligned}
 \mathbb{E}m^{(4)}_{ij}
 &=a_N B_{ij}+b_N\left(\Sigma_{ij}\tr\Sigma
             +2\Sigma_{ik}\Sigma_{kj}\right),\\
 a_N&=\frac{(N-1)(N^2-3N+3)}{N^3},
 & b_N&=\frac{(N-1)(2N-3)}{N^3},
\end{aligned}
\label{eq:finite_N_fourth}
\end{equation}
where \(\mathbb E\) denotes expectation over repeated samples.  For a single velocity
component, \eqref{eq:finite_N_fourth} reduces to the classical finite-sample
identity for the biased fourth central moment,
\(\mathbb{E}m^{(4)}=[(N-1)(N^2-3N+3)\mu_4+3(N-1)(2N-3)\sigma^4]/N^3\), which
uses the population fourth central moment \(\mu_4\) and variance \(\sigma^2\), fixes \(a_N\) and \(b_N\), and provides an independent check of the tensor
combination.  The primary
correction uses the density-conditioned mean count \(N=256\rho\); repeating
the calculation with fixed \(N=256\) supplies a systematic sensitivity check.
The average of the ten independently accumulated block reconstructions differs
from the final corrected tensor by less than \(1.9\times10^{-5}\) in component
root-mean-square (RMS) norm, well below the resolved model discrepancy.

Let \(\delta\Aten=\Aten_{\rm R26}-\Aten_{\rm DSMC}\), and pack its in-plane
components with the Frobenius weight as
\(\delta\boldsymbol a=(\delta A_{xx},\sqrt{2}\delta A_{xy},\delta A_{yy})\),
so that the Euclidean norm of \(\delta\boldsymbol a\) equals the Frobenius
norm of the symmetric in-plane tensor, off-diagonal multiplicity included.
On the interior rectangle obtained by removing eight cells from each wall, the
declared discrete divergence \(D_h\) defines the orthogonal split
\begin{equation}
 \delta\boldsymbol a_H
 =\left[I-D_h^{\mathsf T}(D_hD_h^{\mathsf T})^{\dagger}D_h\right]
   \delta\boldsymbol a,
 \qquad
 \delta\boldsymbol a_V=\delta\boldsymbol a-\delta\boldsymbol a_H .
\label{eq:discrete_projection}
\end{equation}
Here \(I\) is the identity on the packed in-plane grid space, \(\dagger\) denotes the Moore--Penrose pseudoinverse, and the subscripts \(H\) and \(V\) denote hidden (divergence-null) and visible (divergence-generating) parts.  Thus \(D_h\delta\boldsymbol a_H=0\), while
\(\delta\boldsymbol a_V\) is the minimum-norm part capable of changing the
discrete fourth-order divergence.  The \(A_{zz}\) discrepancy is added to the
hidden energy because it is absent exactly from
equations~\eqref{eq:Bx}--\eqref{eq:By}.  The calculation is an integration-like
least-squares projection; no spatial derivative of a DSMC field is used to
construct the hidden tensor.

Table~\ref{tab:kn020_fourth_components} shows that the individual composite
components differ substantially even though the circulation and heat-flux
direction agree much more closely.  The \(A_{xx}\) and \(A_{yy}\) relative RMS
errors are 0.578, and the \(A_{xy}\) error is 0.942.  DSMC \(A_{zz}\) is small,
so its relative error is not a useful scale; moreover, its RMS magnitude
(0.0031) is comparable to the \(1.7\times10^{-3}\) systematic difference
between the two bias corrections, so only an upper bound is claimed for the
DSMC transverse component itself.  Its absolute RMS discrepancy from R26 is
0.0171, equal to 17.4 ensemble standard errors and ten times that correction
systematic.  It contributes 18.2\% of the total fourth-order error energy on
the reference interior domain.

\begin{center}
\begin{minipage}{\textwidth}
\centering
\small
\setlength{\tabcolsep}{7pt}
\renewcommand{\arraystretch}{1.14}
\begin{tabular}{lrrrrr}
\toprule
component & DSMC RMS & R26 RMS & relative RMS error & correlation & error/SE \\
\midrule
\(A_{xx}\) & 0.0208 & 0.0151 & 0.578 & 0.862 & 12.3 \\
\specialrule{0.25pt}{1.8pt}{1.8pt}
\(A_{xy}\) & 0.0244 & 0.0361 & 0.942 & 0.885 & 35.7 \\
\specialrule{0.25pt}{1.8pt}{1.8pt}
\(A_{yy}\) & 0.0181 & 0.0159 & 0.578 & 0.861 & 10.8 \\
\specialrule{0.25pt}{1.8pt}{1.8pt}
\(A_{zz}\) & 0.00314 & 0.0189 & 5.44 & 0.563 & 17.4 \\
\bottomrule
\end{tabular}
\captionof{table}{Finite-particle-corrected composite fourth-order comparison at
\(\Kn_{\rm Gu}=0.20\).  SE denotes the ensemble standard error of the seven
DSMC realisations.  The large relative \(A_{zz}\) value reflects its small
DSMC RMS; the absolute error and error/SE columns establish that the
discrepancy is resolved.}
\label{tab:kn020_fourth_components}
\end{minipage}
\end{center}

The reference projection places 95.1\% of the in-plane error energy in
\(\ker D_h\); adding \(A_{zz}\) raises the hidden fraction to 96.0\%.
Across 48 combinations of correction, wall crop, smoothing width and
divergence operator, the fraction remains 95.0--97.3\%.  Separate seed
projections give \(0.9540\pm0.00015\) (mean \(\pm\) standard error).

These fractions must be judged against the null-space content of
structureless fields, because the discrete kernel is itself a substantial
subspace.  The in-plane kernel of the reference operator occupies 0.352 of
the packed degrees of freedom, and five isotropic Gaussian tensor fields
projected through it give hidden in-plane fractions of \(0.354\pm0.001\).
All twenty-one pairwise seed-difference DSMC noise fields,
\((\Aten_i-\Aten_j)/\sqrt{2}\), where \(\Aten_i\) is the reconstructed tensor from realisation \(i\), projected through the identical central and
forward operators, give in-plane hidden fractions of 0.352--0.377 and total
fractions of 0.521--0.541.  The measured discrepancy is therefore not
generically hidden: its visible in-plane share of 4.9\% is thirteen times
smaller than the \(\approx\!63\%\) visible share of sampling noise, and its
hidden fraction exceeds the noise baseline by a factor of about 2.6.  The
null-space concentration is a property of the model--data difference, not of
the operator.  This is the measured counterpart of the Airy and transverse
null spaces: most of the
resolved fourth-order disagreement cannot alter the observed divergence,
whereas the smaller visible part is the channel sensed by the heat-flux
balance.  The projection therefore tests directly the composite fourth-order
channel identified by that balance.  Figure~\ref{fig:kn020_fourth_order} visualises this result: the in-plane discrepancy and its hidden projection are nearly indistinguishable, whereas the divergence-generating remainder is much smaller; the upper row separately displays the exactly unobserved transverse component.  The transverse panels also carry the physical point.  The DSMC \(A_{zz}\) is small and nearly structureless---the kinetic transverse fourth moment stays close to the split implied by \(\Delta/3\)---whereas the R26 \(A_{zz}\) is six times larger in RMS and spatially organised.  No in-plane observable constrains this component, so the model populates the invisible channel according to its own closure dynamics, and the difference map is closure error that is exactly, not approximately, outside the reach of flux-side validation.

\begin{landscapefigurepage}
  \makebox[\linewidth][c]{\includegraphics[width=1.17\linewidth,height=0.78\textheight,keepaspectratio]{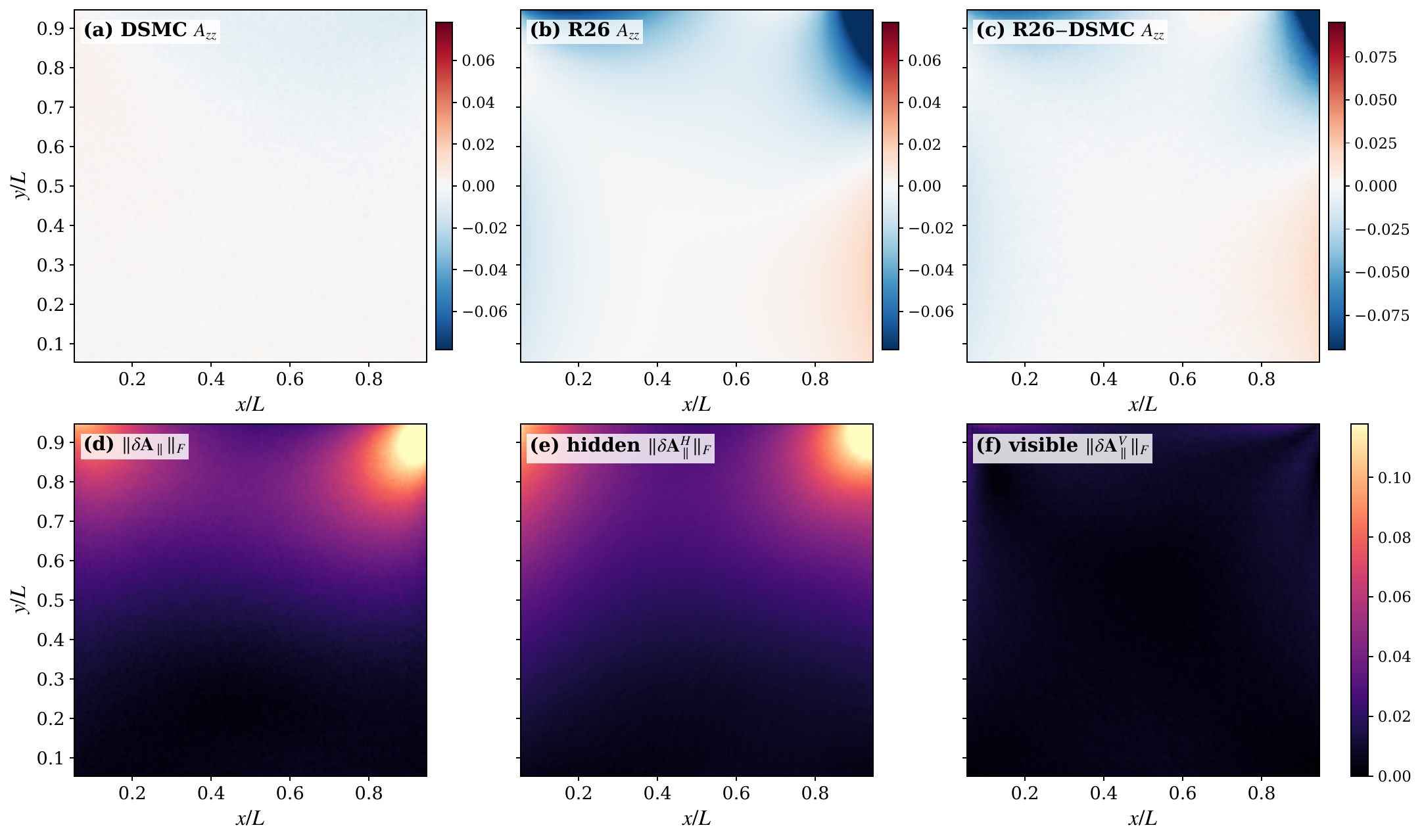}}
  \captionof{figure}{Finite-particle-corrected fourth-order comparison and
  null-space projection at \(\Kn_{\rm Gu}=0.20\).  The upper row shows the
  exactly unobserved transverse component from DSMC, R26 and their difference.
  The lower row shows the Frobenius magnitude of the in-plane R26--DSMC error,
  its divergence-free projection and the minimum-norm divergence-generating
  remainder.  Fields are shown on the reference interior domain; no display
  smoothing is applied.  The similarity of panels (d) and (e), and the much
  smaller amplitude in panel (f), visualise the measured closure information
  that heat-flux agreement does not test.}
  \label{fig:kn020_fourth_order}
\end{landscapefigurepage}

R26 does not form a monotone error hierarchy with R13 for every observable.  \citet{GuEmerson2009} demonstrated substantial R26 improvements for canonical Couette and Poiseuille flows, but those gains are process- and quantity-dependent rather than a guarantee of pointwise superiority.  As an external implementation check, appendix~\ref{app:yang_validation} compares the present solver with the published R26 centreline velocities of \citet{YangTangYang2019} at their \(\Kn=0.1\).  The relative errors are 3.9\% for \(u_x\) and 6.2\% for \(u_y\), with correlations 0.99962 and 0.99986 and last-grid changes below 0.7\%.  This validates the lower-order velocity response independently of the present DSMC cases.  The published near-wall R26 heat-flux profiles were not reproduced within the predeclared tolerance and are therefore not used as external heat-flux validation; heat-flux evidence in the main text instead comes from the DSMC comparisons with quantified sampling and grid uncertainty.

The mechanism identified at \(\Kn_{\rm Gu}=0.05\) also predicts how the failure should scale, and figure~\ref{fig:kn020_antifourier} confirms the prediction.  At \(\lambda_{\rm Gu}/L=0.2\) the ``layer'' occupies a fifth of the cavity, so a gradient-slaved closure such as \eqref{eq:r13_slaving} is stressed over the bulk of the domain, not merely in a wall strip: panel~(e) shows R13's heat-flux angular error reaching and exceeding \(90^\circ\) across most of the counter-gradient region, and its support overlap collapses to a Jaccard index of 0.091.  R26, whose promoted balances remove that particular slaving, keeps the angular error within a few tens of degrees over the same region, with its residual concentrated in stripes along the set boundary and the near-wall cells (panel~f).  What limits R26 here is the next level of the same construction: its own regularisation slaves the fifth-order moments to gradients of \(m_{ijk}\), \(R^{\cl}_{ij}\) and \(\Delta\) through the \(\phi\), \(\psi\) and \(\Omega\) terms \citep{GuEmerson2009}, so the closure error migrates one level up the hierarchy rather than disappearing---consistent with a heat-flux error of 0.407 that is far above the sampling floor yet strongly wall-layer-organised.  Where that residual error lives in state space is precisely what \S\ref{sec:kn020_projection} measures.

A physically plausible interpretation is that increasing the moment order improves representational capacity without imposing a variational ordering on each derived flux.  Heat flux is a third-order velocity moment whose balance contains divergences of fourth-order moments, so it is especially sensitive to errors in the non-equilibrium tails and to the wall values of the added moment fields.  In a lid-driven cavity, the discontinuous change from a stationary side wall to the moving lid creates interacting wall, corner and Knudsen layers precisely where kinetic-distribution errors are largest \citep{GuEmerson2014}.  R26 supplies more independent transport and wall channels for this physics, but those channels remain sensitive to collision coefficients and boundary closure \citep{RanaGuptaSprittlesTorrilhon2021,CaiWang2020}.  Conversely, an R13 closure can exhibit partial error cancellation for one heat-flux profile without recovering the underlying higher moments.  The present result---in which R26 improves the whole-field heat-flux error, vector correlation and angular alignment over R13, and preserves that ordering under the declared processing sweep---is therefore substantive but case-specific rather than a universal monotonicity claim.

The same distinction controls downstream propagation.  A divergence-free perturbation is invisible in the instantaneous fourth-order term of the heat-flux balance, but it need not remain dynamically silent in the complete R26 system: the higher-moment equations and wall maps can couple it back to \(m_{ijk}\), stress and heat flux.  The present construction therefore establishes a missing validation channel, not a dynamically decoupled mode.  A quantitative propagation claim requires paired, converged R26 evolutions or a tangent-linear residual experiment and is not inferred here.

What would close the two-dimensional null space?  The one-dimensional answer, a scalar-excess budget, is necessary but not sufficient.  The in-plane Airy freedom requires information that constrains the tensor field itself, including higher-moment balances and their wall traces.  This requirement is consistent with the wider regularized-moment literature: Maxwell-type and thermodynamically admissible R13 wall conditions, linear regularized 26-moment (LR26) \(H\)-theorem boundary conditions, and more recent Onsager formulations all treat the boundary map as part of the physical closure rather than as a numerical accessory \citep{GuEmerson2007,RanaStruchtrup2016,RanaGuptaSprittlesTorrilhon2021,CaiTorrilhonYang2024,LinWangYangCai2025}.  Fundamental-solution studies likewise show that regularization introduces explicit Knudsen-layer modes whose amplitudes are fixed through the wall conditions \citep{ClaydonShresthaRanaSprittlesLockerby2017}.  The compactly supported Airy construction above leaves the flux-side wall trace unchanged, but it is not a proof of compatibility with any one of these complete coupled wall maps.  The present DSMC construction therefore does not contradict a well-posed R13 or R26 boundary-value problem; it shows that the heat-flux vector alone does not provide the missing wall and higher-moment information.

\subsection{Interpretation and scope of the comparison}

The results retain four distinct scopes.  First, the exact heat-flux balance constrains \(\partial_jA_{ij}\), so divergence-free symmetric additions and the 2D3V transverse channel are not observed by that quantity alone.  Second, at \(\Kn_{\rm Gu}=0.05\), five eight-realisation VHS designs quantify grid and particle-loading variation; direct sampling of all ten symmetric components of \(m_{ijk}\) supports the reported tensorial projection, while local \(\Delta\) remains more sampling-sensitive.  Third, the external Yang check provides an independent validation of the present R26 velocity response.  Fourth, the \(\Kn_{\rm Gu}=0.20\) comparison uses a seven-realisation Maxwell--VSS mean, its finite-particle-corrected composite fourth moment and the finest R26 solution in the \(25^2\)--\(28^2\) sequence.  R26 reproduces velocity and heat-flux orientation more closely than R13, but its \(E_q=0.407\) discrepancy is well above DSMC sampling uncertainty and the unresolved \(29^2\) iteration prevents a claim of asymptotic grid convergence.  The fourth-order projection measures the hidden composite-tensor discrepancy directly and connects the observability result to the simulated flow.

\section{Conclusions}
\label{sec:conclusion}

The anti-Fourier heat flux of a rarefied cavity is a useful physical validation target, but it is not a full-state certificate.  The exact heat-flux balance shows that the fourth-order term enters through the divergence of the composite tensor \(A_{ij}=R^{\cl}_{ij}+\Delta\delta_{ij}/3\).  In a two-dimensional cavity this flux operator leaves a divergence-free tensor null space and an exactly invisible out-of-plane channel, both absent in this form from the one-dimensional shock.  Compactly supported Airy perturbations establish the interior and flux-side boundary obstruction without being promoted to alternative kinetic or R26 solutions.

The DSMC campaign demonstrates a robust channel-level result and a clear numerical limitation.  Across five eight-realisation VHS designs, \(f_{AF|\Omega}=0.0400\)--0.0538 and \(P_\Delta/P_R=0.0381\)--0.0485, while relative field differences remain 12.1--17.5\% for \(q\), 16.3--23.3\% for \(R^{\cl}\), and 72.9--104.9\% for \(\Delta\).  Thus anti-Fourier occurrence and tensorial-channel dominance are robust under the tested refinements, but the local scalar fourth moment is not pointwise converged.  All symmetric \(m_{ijk}\) components required for the tensorial projection are sampled directly; componentwise validation of the complete R26 higher-moment state lies outside the present scope.

The \(\Kn_{\rm Gu}=0.20\) calculation extends that hierarchy into the transition regime without a Knudsen-number conversion ambiguity.  Relative to the seven-realisation Maxwell--VSS mean, R26 gives \(E_u=0.084\), \(C_u=0.997\), \(E_q=0.407\), \(C_q=0.922\) and a heat-flux angular difference of \(12.5^\circ\); the corresponding diagnostic R13 values are 0.160, 0.988, 0.650, 0.812 and \(27.4^\circ\).  The ordering persists over the declared processing sweep, while the R26 heat-flux discrepancy remains far above the 2.10\% DSMC relative standard error.  The independent fourth-moment reconstruction then closes the logical link between theorem and flow: 95.0--97.3\% of the R26--DSMC composite-tensor error energy lies in the discrete divergence-null space over the declared 48-variant audit, against a 0.52--0.54 baseline for structureless noise fields projected through the same operator, and the transverse \(A_{zz}\) error alone is resolved at 17.4 ensemble standard errors.  Convergence through \(28^2\) and failure of the \(29^2\) iteration are both reported; the result supports a bounded case comparison, not an asymptotically extrapolated solution or complete R26 validation.

The combined results therefore establish a selective validation hierarchy.  Occurrence demonstrates that a model can generate the anti-Fourier signature; topology tests where it occurs; vector comparison tests \(\qv\); and closure comparison tests the observed fourth-order channel and independently available higher moments.  The external Yang comparison independently validates the R26 centreline velocities, while the two DSMC cases test the transport fields under the present collision and wall conventions.  These results support the reported lower-order circulation and case-specific R26 improvement over diagnostic R13.  They do not turn agreement in velocity, heat-flux direction or anti-Fourier occurrence into a certificate of wall-layer heat flux, binary topology or the complete R26 state.

\appendix
\renewcommand{\theHsection}{appendix.\Alph{section}}
\section{Audit of the fourth-order projection}
\label{app:projection_audit}

The reduced fourth-order archive contains the fifteen sampled fields for each
of the seven primary realisations, the ten-block reconstruction average and
the source-file hashes.  The correction in equation~\eqref{eq:finite_N_fourth}
is the exact fixed-\(N\) identity for a sample central fourth moment.  Because
the instantaneous cell count fluctuates, the primary calculation substitutes
the measured local mean count \(256\rho\); fixed \(N=256\) is carried as an
explicit systematic alternative.  The RMS difference between these two
corrections is \(1.74\times10^{-3}\), \(1.09\times10^{-4}\),
\(1.72\times10^{-3}\) and \(1.72\times10^{-3}\) for
\(A_{xx},A_{xy},A_{yy},A_{zz}\), respectively.  These values are included in
the projection sweep rather than absorbed into the ensemble standard error.

The discrete tensor is packed with \(\sqrt{2}\) on the off-diagonal component,
so Euclidean norm in the packed representation equals the tensor Frobenius
norm.  Equation~\eqref{eq:discrete_projection} is evaluated by a least-squares
solve for the projection onto \(\operatorname{range}(D_h^{\mathsf T})\); the
hidden residual is therefore orthogonal to the visible part.  In the reference
calculation the normalized orthogonality error is \(2.7\times10^{-14}\), and
the normalized residual \(\|D_h\delta\boldsymbol a_H\|/
\|D_h\delta\boldsymbol a\|\) is \(4.9\times10^{-8}\).  The audit crosses
centred and forward operators, crops of 4, 8, 12 and 16 cells, smoothing widths
of 1, 3 and 7 cells, and both moment-bias corrections.  The lower endpoint of
the resulting 95.0--97.3\% interval is used whenever a single conservative
figure is required.

A structureless control accompanies the audit.  All twenty-one pairwise
seed-difference fields and five isotropic Gaussian tensor fields were
projected through the identical operators.  Their in-plane hidden fractions
are 0.352--0.377 and their total fractions 0.521--0.541, consistent with the
operator kernel-dimension fraction \(1-2(N_g-2)^2/(3N_g^2)=0.352\) at \(N_g=144\), where \(N_g\) is the number of retained grid points per coordinate.  The distributed control script
reproduces the reference projection fractions to \(10^{-15}\) and the stored
hidden field to \(4\times10^{-8}\) relative RMS before evaluating any control
field, and writes its record to
\texttt{nullspace\_noise\_baseline.json}.

\section{External R26 velocity-profile check}
\label{app:yang_validation}

The present R26 implementation was applied to the independent lid-driven
cavity of \citet{YangTangYang2019}: \(L_0=10^{-5}\,\mathrm{m}\),
\(T_0=273\,\mathrm{K}\), \(U_0=10\,\mathrm{m\,s^{-1}}\), fully diffuse
walls and the paper's \(\Kn=0.1\).  Published R26 triangular markers in their
figure~7 were extracted from the publisher's vector artwork, without curve fitting
or manual smoothing.  The present \(17^2\), \(21^2\) and \(25^2\) states all
satisfied positivity, global-balance and residual tolerances.  On the finest grid,
the relative \(L_2\) errors are 0.0391 for the vertical-centreline \(u_x\)
profile and 0.0618 for the horizontal-centreline \(u_y\) profile; the vector
correlations are 0.99962 and 0.99986.  The corresponding \(21^2\)--\(25^2\)
changes are 0.33\% and 0.62\%.  The predeclared gates required a relative
\(L_2\) error below 0.15 against each published profile and a last-grid change
below 0.10; both velocity profiles pass with a wide margin.  The published
near-wall R26 heat-flux profiles were not reproduced within the predeclared
reproduction tolerance and are therefore not used as external heat-flux
validation; heat-flux evidence in the main text rests on the DSMC comparisons
with quantified sampling and grid uncertainty.

Figure~\ref{fig:yang_velocity_validation} displays the two centreline checks and the three-grid convergence inset.  Its purpose is limited to the lower-order velocity response; it does not supply the heat-flux or fourth-order validation channels that failed or were not tested against the published case.

\begin{figure}[h]
  \centering
  \includegraphics[width=\textwidth]{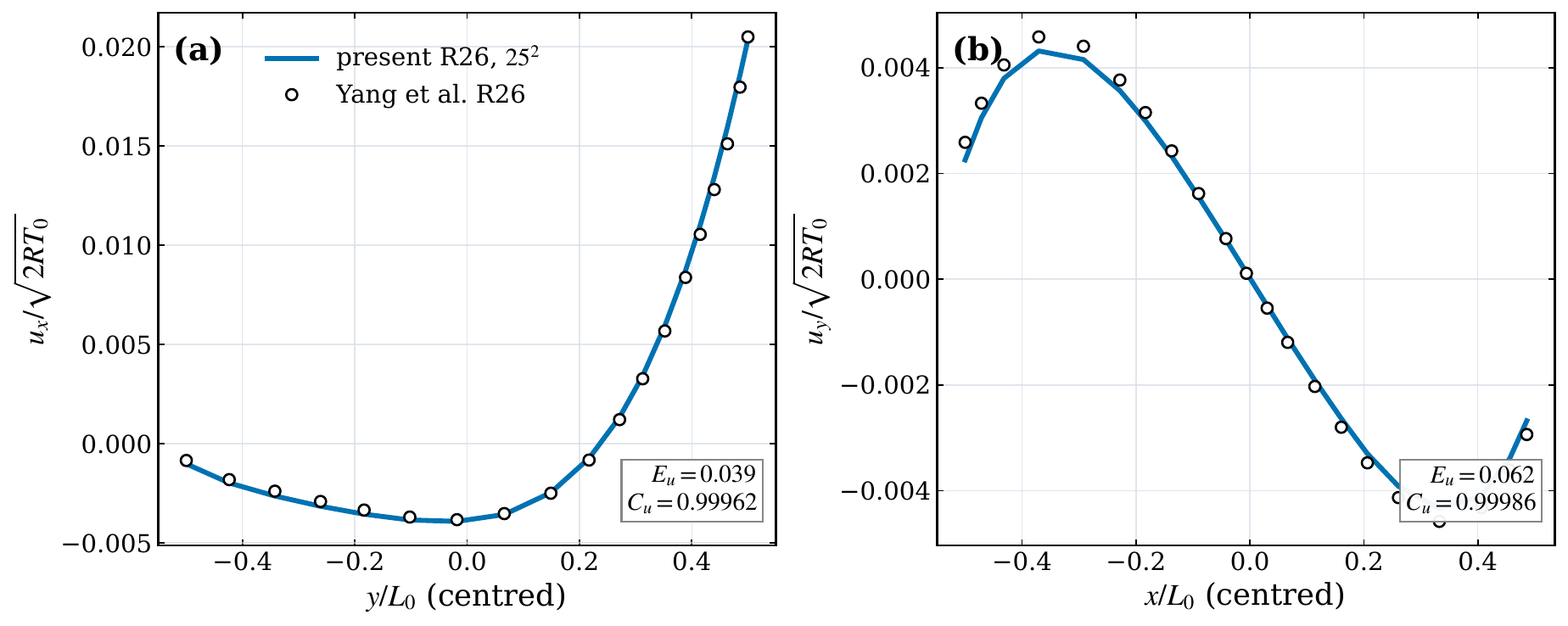}
  \caption{External velocity-profile check against the published R26 solution
  of \citet{YangTangYang2019} at \(\Kn=0.1\).  Symbols denote digitised
  published R26 data and lines denote the present \(25^2\) calculation.  This
  figure validates the lower-order velocity response; the published near-wall
  heat-flux profiles did not pass the predeclared reproduction tolerance and
  are not claimed as external heat-flux validation.}
  \label{fig:yang_velocity_validation}
\end{figure}

The separation is deliberate.  The external comparison establishes independently that the solver reproduces
the published R26 circulation response, while heat-flux assessment in the main
study rests on the DSMC comparisons and their explicit uncertainty.  Neither
check by itself validates the full higher-moment state, which is precisely the
observability distinction developed in
\S\ref{sec:theory}.

\section*{Acknowledgements}
The author thanks Dr Anirudh S. Rana for providing the legacy 17-field cavity code underlying the R13 coefficient matrices and saddle-point structure.  The present Python solver is not identical to the supplied code: it uses a conservative shared-face finite-volume continuity equation, a compatible global mass constraint, defect-Newton and Jacobian-free Newton--Krylov (JFNK) nonlinear iteration, the printed two-point wall extrapolation, tangential effective-pressure wall treatment, an explicit Maxwell-molecule production operator and reproducibility checks.  The author also thanks the rarefied-gas-dynamics community whose studies of cavity heat transfer, moment closures and kinetic particle methods motivated the work.

\section*{Funding}
This work received no specific grant from any funding agency, commercial or not-for-profit sector.

\section*{Declaration of interests}
The author reports no conflict of interest.

\section*{Data availability statement}
\begingroup\sloppy
The source code and reduced numerical data are available in the \href{https://github.com/Ehsan-Roohi/Fourier/tree/main/r13-r26-cavity}{public GitHub reproducibility record}.  It contains the R13 and R26 source used in this study, numerical tests, SPARTA case-generation and validation scripts, the common-grid analysis and plotting code, reduced DSMC sensitivity tables, the seven-realisation \(\Kn_{\rm Gu}=0.20\) ensemble mean and standard errors, the finite-particle-corrected fourth-order projection audit, and the digitised Yang velocity reference with its extraction and comparison record.  The submission archive additionally contains the numerical states, processed comparison arrays and 256-bit Secure Hash Algorithm (SHA-256) manifests required to reproduce figures~\ref{fig:dsmc_grid_ppc}, \ref{fig:kn005_primary}--\ref{fig:kn005_antifourier}, \ref{fig:kn020_primary}--\ref{fig:kn020_fourth_order} and \ref{fig:yang_velocity_validation}.
\endgroup

\section*{Declaration of artificial-intelligence-assisted technologies}
\begingroup\sloppy
The author used the artificial intelligence (AI) system ChatGPT (OpenAI, accessed May--August 2026) for language editing and code refactoring.  The tool did not determine the research question, choose the physical model, accept numerical evidence or make authorship decisions.  The author reviewed the generated text and code and takes full responsibility for the manuscript.
\endgroup

\end{document}